\tikzstyle{simple}=[-,line width=2.000]
\tikzstyle{arrow}=[-,postaction={decorate},decoration={markings,mark=at position .5 with {\arrow{>}}},line width=1.100]
\tikzstyle{none}=[inner sep=0pt]
\definecolor{lblue}{rgb}{0,250,255}
\tikzstyle{species}=[circle,fill=yellow,draw=black,scale=1.15]
\tikzstyle{transition}=[rectangle,fill=lblue,draw=black,scale=1.15]
\tikzstyle{inarrow}=[->, >=stealth, shorten >=.03cm,line width=1.5]
\tikzstyle{empty}=[circle,fill=none, draw=none]
\tikzstyle{inputdot}=[circle,fill=purple,draw=purple, scale=.25]
\tikzstyle{inputarrow}=[->,draw=purple, shorten >=.05cm]
\tikzstyle{simple}=[-,draw=purple,line width=1.000]
\definecolor{linkColor}{RGB}{156,78,13}
\ifdefined\DEBUG{}
\newcommand{\id}{\mathsf{id}}
\tikzset{snake it/.style={decorate, decoration=snake}}
\definecolor{parchment}{RGB}{214, 204, 169}
\theoremstyle{definition}
\newtheorem{problem}[section]{Open Problem}
\newcommand{\Oh}{\mathcal{O}}
\newcommand{\define}[1]{\textbf{#1}}
\newcommand{\cat}[1]{\mathsf{#1}}
\newcommand{\typesetcategory}[1]{\cat{#1}}
\newcommand{\typesetoperator}[1]{\mathsf{#1}}
\DeclareMathOperator{\sieve}{\typesetoperator{Sieve}}
\DeclareMathOperator{\tw}{\typesetoperator{tw}}
\DeclareMathOperator{\domain}{\typesetoperator{dom}}
\DeclareMathOperator{\codomain}{\typesetoperator{cod}}
\DeclareMathOperator{\morphisms}{\typesetoperator{Mor}}
\DeclareMathOperator{\image}{\typesetoperator{Im}}
\DeclareMathOperator{\CAT}{\typesetcategory{Cat}}
\DeclareMathOperator{\finset}{\typesetcategory{FinSet}}
\DeclareMathOperator{\set}{\typesetcategory{Set}}
\DeclareMathOperator{\fingr}{\typesetcategory{FinGr}}
\DeclareMathOperator{\GRhomo}{\typesetcategory{Gr}_{H}}
\DeclareMathOperator{\sd}{\mathfrak{D}_m}
\DeclareMathOperator{\const}{\typesetoperator{const}}
\DeclareMathOperator{\colim}{\typesetoperator{colim}}
\DeclareMathOperator{\decision}{\typesetoperator{dec}}
\DeclareMathOperator*{\submon}{\typesetcategory{subMon}}
\DeclareMathOperator*{\decomp}{\typesetcategory{Dcmp}}
\newcommand{\typesetproblem}[1]{\textsc{#1}}
\DeclareMathOperator{\fpt}{\typesetproblem{FPT}}
\DeclareMathOperator{\Aa}{\mathcal{A}}
\DeclareMathOperator{\Fa}{\mathcal{F}}  
\DeclareMathOperator{\Ga}{\mathcal{G}}
\DeclareMathOperator{\Ta}{\mathcal{T}}
\begin{document}

\title{Compositional Algorithms on Compositional Data: Deciding Sheaves on Presheaves}

\address{Computer \& Information Science \& Engineering, 
University of Florida,
432 Newell Drive, Gainesville, FL 32603, USA.}

\author{
Ernst Althaus \\ 
Institute of Computer Science \\
Johannes Gutenberg-University \\ 
Staudingerweg 9 55128, Mainz, Germany \\
ernst.althaus{@}uni-mainz.de 
\and 
Benjamin Merlin Bumpus\thanks{DARPA ASKEM and Young Faculty Award programs through grants HR00112220038 and W911NF2110323}\\ 
Computer and Information Science and Engineering \\ 
University of Florida \\ 
432 Newell Drive, Gainesville, FL 32603, USA. \\ benjamin.merlin.bumpus{@}gmail.com 
\and 
 James Fairbanks\thanks{DARPA ASKEM and Young Faculty Award programs through grants HR00112220038 and W911NF2110323} \\ 
Computer and Information Science and Engineering \\ 
University of Florida \\ 
432 Newell Drive, Gainesville, FL 32603, USA. \\ 
fairbanksj{@}ufl.edu 
\and 
Daniel Rosiak \\ 
National Institute of Standards and Technology \\ 
100 Bureau Dr, Gaithersburg, MD 20899, USA \\ 
danielhrosiak{@}gmail.com
}

\maketitle
\runninghead{Althaus, Bumpus, Fairbanks, Rosiak}{Deciding Sheaves on Presheaves}

\begin{abstract}
Algorithmicists are well-aware that fast dynamic programming algorithms are very often the correct choice when computing on compositional (or even recursive) \textit{graphs}. Here we initiate the study of how to generalize this folklore intuition to mathematical structures writ large. We achieve this horizontal generality by adopting a categorial perspective which allows us to show that: (1) \textit{structured decompositions} (a recent, abstract generalization of many graph decompositions) define Grothendieck topologies on categories of data (adhesive categories) and that (2) any computational problem which can be represented as a \textit{sheaf} with respect to these topologies can be decided in linear time on classes of inputs which admit decompositions of bounded width and whose decomposition shapes have bounded feedback vertex number. This immediately leads to algorithms on objects of any \(\cat{C}\)-set category; these include -- to name but a few examples -- structures such as: symmetric graphs, directed graphs, directed multigraphs, hypergraphs, directed hypergraphs, databases, simplicial complexes, circular port graphs and half-edge graphs. 

Thus we initiate the bridging of tools from sheaf theory, structural graph theory and parameterized complexity theory; we believe this to be a very fruitful approach for a general, algebraic theory of dynamic programming algorithms. Finally we pair our theoretical results with concrete implementations of our main algorithmic contribution in the \textit{AlgebraicJulia} ecosystem.
\end{abstract}

\begin{keywords}
Parameterized Complexity, Dynamic Programming, Sheaf Theory, Category Theory
\end{keywords}

\section{Introduction}\label{sec:introduction}
As pointed out by Abramsky and Shah~\cite{structure-and-power}, there are two main ``organizing principles in the foundation of computation'': these are \textbf{structure} and \textbf{power}. The first is concerned with compositionality and semantics while the second focuses on expressiveness and computational complexity. 

So far these two areas have remained largely disjoint. This is due in part to mathematical and linguistic differences, but also to perceived differences in the research focus. However, we maintain that many of these differences are only superficial ones and that \textit{compositionality} has always been a major focus of the ``power'' community. Indeed, although this has not yet formalized as such (until now), \textit{dynamic programming} and \textit{graph decompositions} are clear witnesses of the importance of compositionality in the field. Thus the main characters of the present story are:  
\begin{enumerate}
    \item[\((\mathbf{SC})\)] the \textbf{Structural} compositionality arising in graph theory in the form of \textit{graph decompositions} and graph \textit{width measures} (whereby one decomposes graphs into smaller and simpler constituent parts~\cite{diestel-graph-decomps-infinite, Diestel2010GraphTheory, RobertsonII, robertsonX, Bertele1972NonserialProgramming, halin1976s, oumThesis, GEELEN2020, RobertsonXX, WOLLAN201547, dtw, berwanger2012dag, hunter2008digraph, safari2005d, kreutzer2018, bumpus2020directed, CARMESIN2022101, layered-tw-dujmovic2017, layered-tw-shahrokhi2015, H-tw-conference}),
    \item[\((\mathbf{AC})\)] the \textbf{Algorithmic} compositionality embodied by the intricate \textit{dynamic programming} routines found in parameterized complexity theory~\cite{GroheBook, CourcelleBook, flum2006parameterized, cygan2015parameterized, downey2013fundamentals}, 
    \item[\((\mathbf{RC})\)] and the \textbf{Representational} compositionality\footnote{We choose the term ``representational'' as a nod to Leray's~\cite{leray1946lanneau} 1946 understanding of what sheaf theory should be: ``Nous nous proposons d'indiquer sommairement comment les m\'ethodes par lesquelles nous avons etudi\'e la topologie d'un espace peuvent \^etre adapt\'ees \`a l'\'etude de la topologie d'une repr\'esentation'' (in English: ``We propose to indicate briefly how the methods by which we have studied the topology of a space can be adapted to the study of the topology of a representation''). As Gray~\cite{gray2006fragments} points out, this is ``the first place in which the word `faisceau' is used with anything like its current mathematical meaning''.} arising in algebraic topology and virtually throughout the rest of mathematics in the form of \textit{sheaves}~\cite{gray2006fragments, rosiak-book, maclane2012sheaves} (whereby one systematically assigns data to `spaces' in such a way that one can easily keep track of how local data interacts with global data).
\end{enumerate}

The study of graph decompositions and their associated `width measures' has been an extremely active area of research which has generated a myriad of subtly different methods of decomposition~\cite{Bertele1972NonserialProgramming, halin1976s, RobertsonII, CourcelleBook, GroheBook, diestel-graph-decomps-infinite, oum2006approximating, dtw, berwanger2012dag, hunter2008digraph, safari2005d, kreutzer2018, CARMESIN2022101, layered-tw-dujmovic2017, layered-tw-shahrokhi2015, H-tw-conference, WOLLAN201547, bumpus2020directed, bumpus2022edge}. From a computationally minded perspective, these notions are crucial for dynamic programming since they can be seen as data structures for \textit{graphs} with compositional structure~\cite{GroheBook, CourcelleBook, cygan2015parameterized, downey2013fundamentals}. 

Sheaves on the other hand supply the canonical mathematical structure for attaching data to spaces (or something ``space-like''), where this further consists of restriction maps encoding some sort of local constraints or relationships between the data, and where the sheaf structure prescribes how compatible local data can be combined to supply global structure. As such, the sheaf structure lets us reason about situations where we are interested in tracking how compatible local data can be stitched together into a global data assignment. While sheaves were already becoming an established instrument of general application by the 1950s, first playing decisive roles in algebraic topology, complex analysis in several variables, and algebraic geometry, their power as a framework for handling all sorts of local-to-global problems has made them useful in a variety of areas and applications, including sensor networks, target tracking, dynamical systems (see the textbook by Rosiak~\cite{rosiak-book} for many more examples and references).

\paragraph{Our contribution} is to show how to use tools from both sheaf theory and category theory to bridge the chasm separating the ``structure'' and ``power'' communities. Indeed our main \textit{algorithmic} result (Theorem~\ref{thm:filtering-algorithm}) is a meta-theorem obtained by amalgamating these three perspectives. Roughly it states that any decision problem which can be formulated as a \textit{sheaf} can be solved in \textit{linear time} on classes of inputs which can built compositionally out of constituent pieces which have \textit{bounded internal complexity}. In more concrete terms, this result yields linear (\(\fpt\)-time) algorithms for problems such as \(H\)-coloring (and generalizations thereof) on a remarkably large class of mathematical structures; a few examples of these are: \begin{enumerate*} \item databases,\item simple graphs, \item directed graphs, \item directed multigraphs, \item hypergraphs, \item directed hypergraphs (i.e. Petri nets), \item simplicial complexes, \item circular port graphs~\cite{libkind2021operadic} and \item half-edge graphs \end{enumerate*}.

To be able to establish (or even state) our main algorithmic result, one must first overcome two highly nontrivial hurdles. The first is that one needs to make sense of what it means for mathematical objects (which need not be graphs) to display compositional structure. The second is that, in order to encode computational problems as sheaves, one must first equip the input objects with a notion of a \textit{topology} which moreover is expressive enough to encode the compositional structure of the input instances. 

We overcome the first issue by abandoning the narrow notion of \textit{graph} decompositions in favor of its broader, object-agnostic counterpart, namely \textit{structured decompositions}. These are a recent category-theoretic generalization of graph decompositions to objects of arbitrary categories~\cite{structured-decompositions}. 

This shift from a graph-theoretic perspective to a category-theoretic one also allows us to overcome the second hurdle by thinking of decompositions as \textit{covers} of objects in a category. More precisely we prove that structured decompositions equip categories of data (i.e. adhesive categories) with  \textit{Grothendieck topologies}. This is our main \textit{structural} result (Corollary~\ref{corollary:decomps-as-topologies}). It empowers us with with the ability to think of \textit{sheaves} with respect to the \textit{decomposition topology} as a formalization of the vague notion of `computational problems whose compositional structure is compatible with that of their inputs'. 

\subsection{Roadmap}
Throughout the paper we will built up all the necessary category theoretic formulations to make the following illustration a precise and formal commutative diagram (this will be Diagram~\ref{diagram:algorithmic-goal-NO-COLIMITS}). 
\begin{equation}\label{diagram:ENGLISH}
\adjustbox{scale=1.5, max width=\textwidth}{
\begin{tikzcd}[row sep=huge]
	& {\text{Data}} &&& {\text{Sol. Spaces}} &&&&& {\text{Answer Space}} \\
	{\text{Data w/ Decomps.}} \\
	& {\text{Decomps. of Data}} &&& {\text{Decomps. of Sol. Spaces}} && {\text{Decomps. of Sol. Spaces}} &&& {\text{Decomps. of Answers}}
	\arrow[""{name=0, anchor=center, inner sep=0}, "{\text{collect}}"', color={rgb,255:red,214;green,92;blue,214}, from=3-5, to=1-5]
	\arrow["{\text{GlobalSolSpace}}", color={rgb,255:red,92;green,92;blue,214}, from=1-2, to=1-5]
	\arrow["{\text{LocalSolSpace}}"', color={rgb,255:red,214;green,92;blue,92}, from=3-2, to=3-5]
	\arrow["{comm.}", shift right=5, draw=none, from=3-2, to=1-5]
	\arrow["{\text{is Empty?}}", color={rgb,255:red,92;green,92;blue,214}, from=1-5, to=1-10]
	\arrow["{\text{Algorithm}}"', color={rgb,255:red,214;green,92;blue,92}, from=3-5, to=3-7]
	\arrow["{\text{is Empty (locally) ?}}"', color={rgb,255:red,214;green,92;blue,92}, from=3-7, to=3-10]
	\arrow[""{name=1, anchor=center, inner sep=0}, "\land"', color={rgb,255:red,214;green,92;blue,92}, from=3-10, to=1-10]
	\arrow[color={rgb,255:red,92;green,92;blue,214}, from=2-1, to=1-2]
	\arrow[color={rgb,255:red,214;green,92;blue,92}, from=2-1, to=3-2]
	\arrow["{comm.}"{description}, draw=none, from=0, to=1]
\end{tikzcd}
}
\end{equation}
This diagram, whose arrows will be functors relating appropriate categories, relates \textbf{structure} and \textbf{power}  by succinctly encoding all three forms of compositionality we identified earlier. The categories on the bottom row correspond to \textbf{Structural Compositionality}. The colored paths correspond to different algorithms:
\begin{enumerate}
    \item the top path (in blue) corresponds to a brute-force algorithm, 
    \item the ``middle path'' crossing upwards (red then pink then blue) corresponds to a compositional algorithm, but a \textit{slow} one and
    \item the bottom path (in red) corresponds to a fast (\(\fpt\)-time) dynamic programming algorithm.
\end{enumerate}
The commutativity of the squares (and of the diagram of the whole) represents correctness of the algorithm since it implies that it is equivalent (in terms of its returned solution) to the brute force approach; this is \textbf{Algorithmic Compositionality}. The ingredient upon which all of our reasoning hinges is the assumption that the arrow labeled ``\textit{GlobalSolSpace}'' is a sheaf. This is \textbf{Representational Compositionality} and it implies the commutativity of the whole diagram. 

Algorithmic Compositionality on its own is interesting, but it is only useful when it is paired with running time guarantees. To that end, precise statements about the running time of subroutines of the algorithm can already be gleaned from Diagram~\ref{diagram:ENGLISH}. Indeed note that the arrow labeled ``\textit{LocalSolSpace}'' (resp. ``\textit{is empty (locally)?}'') corresponds to computing local solution spaces (resp. answering the decision problem locally) in order to turn decompositions of data into decompositions of solution spaces (resp. Booleans). These arrows correspond to local computations done a linear\footnote{(linear in the number of component pieces of the decomposition)} number of times and thus one can intuit that the running time will be dominated by the arrow suggestively named ``Algorithm''. Most of our technical work will be devoted to proving the existence of this functor (Theorem~\ref{thm:A-is-functor}) and in showing that it is efficiently computable (Theorem~\ref{thm:filtering-algorithm}). This functor is not only \textit{a} sensible choice, but it is also in some sense the \textit{canonical} one: it is exactly what one comes to expect using standard category theoretic machinery\footnote{The functor arises via the \textit{monad} given by the \(\mathsf{colim} \dashv \mathsf{const}\) adjunction of diagrams}. All of these arguments will hinge on Representational Compositionality: they require `\textit{GlobalSolSpace}' to be a sheaf.

Given this context, we can now offer an informal statement of our main algorithmic contribution (Theorem~\ref{thm:filtering-algorithm}). It states that, given any problem encoded as a sheaf with respect to the topology given by the decompositions of data, there is an algorithm which solves its associated decision problem in time that grows 
\begin{enumerate}
    \item linearly in the number of constituent parts of the decompositions and 
    \item boundedly (often exponentially) in terms of the internal complexity of the constituent parts.
\end{enumerate}
Formally the theorem and the computational task it achieves read as follows. (Sections~\ref{sec:problems-as-functors} and Section~\ref{sec:compositional-data} are devoted to building up all of the necessary resources to make sense of this statement.) 

\begin{framed}
    \noindent \(\cat{C}-\textsc{SheafDecision}\) \\
    \noindent \textbf{Input:} a sheaf \(\mathcal{F} \colon \cat{C} \to \finset^{op}\) on the site \((\cat{C}, \decomp)\) where \(\cat{C}\) is a small adhesively cocomplete category, an object \(c \in \cat{C}\) and a \(C\)-valued structured decomposition \(d\) for \(c\).\\
    \noindent \textbf{Task:} letting \(\mathbf{2}\) be the two-object category \(\bot \to \top\) and letting \(\mathrm{dec} \colon \finset \to \mathbf{2}\) be the functor taking a set to \(\bot\) if it is empty and to \(\top\) otherwise, compute \( \mathrm{dec}^{op}\mathcal{F}c \).
\end{framed}

\begin{theorem}\label{thm:filtering-algorithm}
Let \(G\) be a finite, irreflexive, directed graph without antiparallel edges and at most one edge for each pair of vertices. Let \(\cat{D}\) be a small adhesively cocomplete category, let \(\Fa \colon \cat{D}^{op} \to \finset \) be a presheaf and let \(\cat{C}\) be one of \(\{\cat{D}, \:\cat{D}_{mono}\}\). If \(\Fa\) is a sheaf on the site \(\bigl(\cat{C}, \decomp|_{\cat{C}}\bigr)\) and if we are given an algorithm \(\Aa_{\Fa}\) which computes \(\Fa\) on any object \(c\) in time \(\alpha(c)\), then there is an algorithm which, given any \(\cat{C}\)-valued structured decomposition \(d \colon \int G \to \cat{C}\) of an object \(c \in \cat{C}\) and a feedback vertex set \(S\) of \(G\), computes \(\mathsf{dec} \Fa c \) in time \[\Oh(\max_{x \in VG}\alpha(dx) + \kappa^{|S|}\kappa^2)|EG|\] where \(\kappa = \max_{x \in VG}|\Fa dx|\). 
\end{theorem}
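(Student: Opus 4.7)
The strategy is to use the sheaf axiom to reduce computing $\mathsf{dec}\Fa c$ to deciding the existence of a matching family for the diagram $\Fa d$, and then to solve that constraint satisfaction problem on $G$ by brute-force enumeration over the feedback vertex set $S$ combined with a bottom-up dynamic program on the forest $G - S$.

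First I would apply $\Aa_{\Fa}$ once per vertex $v \in VG$ to build the local solution set $\Fa(dv)$; this contributes time at most $|VG|\cdot \max_{x\in VG}\alpha(dx)$, which is absorbed in the $|EG|$ factor of the claimed bound (isolated vertices can be handled separately). Contravariance of $\Fa$ turns each pair of $\int G$-morphisms $de \to dv, de \to dw$ associated with an edge $e=vw$ of $G$ into restriction functions $r^e_v \colon \Fa(dv)\to \Fa(de)$ and $r^e_w\colon\Fa(dw)\to \Fa(de)$; these cost a constant number of further calls to $\Aa_{\Fa}$. By Corollary~\ref{corollary:decomps-as-topologies}, $d$ is a cover of $c$ in the topology $\decomp|_{\cat{C}}$, and since $\Fa$ is a sheaf, $\Fa c$ is the limit of $\Fa d$. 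Hence $\mathsf{dec}\Fa c = \top$ if and only if there is a family $(s_v)_{v\in VG}$ with $s_v\in \Fa(dv)$ and $r^e_v(s_v) = r^e_w(s_w)$ for every edge $e=vw$.

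Next, enumerate the (at most) $\kappa^{|S|}$ tuples $(s_v)_{v\in S}$. For each tuple: (i) verify that the restrictions agree on every edge inside $G[S]$; (ii) for each edge $e = uv$ with $u\in S$ and $v\notin S$, initialise the candidate set $F_v \subseteq \Fa(dv)$ as the intersection over all such $u$ of $\{t \in \Fa(dv) \mid r^e_v(t)=r^e_u(s_u)\}$; (iii) traverse each connected component of $G-S$ in post-order and, for every edge $e=vw$ with child $v$ and parent $w$, replace $F_w$ by $\{\,s\in F_w \mid \exists\, t\in F_v,\; r^e_v(t)=r^e_w(s)\,\}$. Each such filter step costs $O(\kappa^2)$. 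If every $F_v$ is non-empty when the traversal terminates, return $\top$; if no tuple succeeds, return $\bot$.

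Correctness rests on two observations. First, every cycle of $G$ meets $S$, so once $(s_v)_{v\in S}$ is fixed the residual binary constraint graph on $V(G)\setminus S$ is a forest, and on a forest pairwise edge-compatibility is equivalent (by induction on the number of edges) to the existence of a simultaneous consistent assignment. Second, covers in $\decomp$ record only the pairwise overlaps encoded by edges of $G$, so the sheaf axiom reduces exactly to this matching-family condition. For the running time: preprocessing costs $|VG|\cdot \max_x\alpha(dx)$; each of the $\kappa^{|S|}$ enumerations costs $O(\kappa^2 |EG|)$ for the filtering and traversal; summing yields the claimed bound. The main subtlety, and the expected obstacle, is justifying that $\decomp$-covers impose no higher-arity compatibility obligations beyond edges of $G$: this relies on the explicit description of decomposition topologies given earlier in the paper, and without it the forest-DP step would be unsound.
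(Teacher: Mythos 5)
Your proposal is correct and follows essentially the same route as the paper: reduce, via the sheaf condition, to deciding non-emptiness of the limit of the $\finset$-valued co-decomposition $\sd_\Fa d$, brute-force the $\kappa^{|S|}$ choices of sections over the feedback vertex set (the paper's decompositions $\omega_\sigma$ with singleton bags), and then decide the residual forest by pairwise $\Oh(\kappa^2)$-per-edge filtering, which is exactly the paper's Algorithm~\ref{algorithm:filtering} together with Lemma~\ref{lemma:tree-algorithm-correctness} and Claim~\ref{claim:thm:filtering-algorithm}. The subtlety you flag at the end -- that $\decomp$-covers impose only the pairwise compatibilities along adhesions, so that $\mathsf{dec}\,\Fa c$ is determined by the limit over $(\int G)^{op}$ -- is the same background fact the paper invokes (Equation~\eqref{equation:sheafy-commutativity} and Theorem~\ref{thm:A-is-functor}), so your treatment is on par with the paper's own.
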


Theorem~\ref{thm:filtering-algorithm} enjoys three main strengths: 
\begin{enumerate}
    \item it allows us to recover some algorithmic results on graphs (for instance dynamic programming algorithms, for \textsc{\(H\)-Coloring} and for \textsc{\(H\)-ReflColoring}\footnote{Here, given a fixed graph \(H\), one is asked to determine whether a given graph \(G\) admits a reflexive homomorphism onto \(H\) (where a reflexive homomorphism is a vertex map \(f: VG \to VH\) such that, for all vertex pairs \((x,y)\) in \(G\), if \(f(x)f(y) \in EH\), then \(xy \in EG\).}) and
    \item it allows us to generalize both decompositions and dynamic programming thereupon to other kinds of structures (not just graphs) and 
    \item it is easily implementable. 
\end{enumerate}
We shall now briefly expand on the last two of these points. Notice that, since Theorem~\ref{thm:filtering-algorithm} applies to any adhesive category, we automatically obtain algorithms on a host of other structures encoded as any category of \(\cat{C}\)-sets~\cite{C-sets, LackAdhesive}. This is a remarkably large class of structures of which the following are but a few examples: \begin{enumerate*} \item databases,\item simple graphs, \item directed graphs, \item directed multigraphs, \item hypergraphs, \item directed hypergraphs, \item simplicial complexes, \item circular port graphs~\cite{libkind2021operadic} and \item half-edge graphs. \end{enumerate*}

We note that categorical perspective allows us to pair -- virtually effortlessly -- our meta-theoretic results (specifically the special case relating to tree-shaped decompositions) with a practical, runnable implementation~\cite{str_decomps_julia_implementation} in the AlgebraicJulia Ecosystem~\cite{evan_patterson_algebraicjuliacatlabjl_2020}. 

Finally, from a high-level view, our approach affords us another insight (already noted by Bodlaender and Fomin~\cite{bodlaender2002tree}): it is not the width of the decompositions of the inputs that matters; instead it is the \textit{width of the decompositions of the solutions spaces} that is key to the algorithmic bounds.

\paragraph{Outline} Our results rely on the following ingredients: \begin{enumerate*}
    \item encoding computational problems as functors
    \item describing structural compositionality via diagrams (and in particular a special class thereof suited for algorithmic manipulation called \textit{structured decompositions}),
    \item proving that these give rise to Grothendieck topologies and finally
    \item proving our main algorithmic result. 
\end{enumerate*}
In Section~\ref{sec:problems-as-functors} we explain how to view computational problems as functors. In Section~\ref{sec:diagrams-as-gr-topos} we provide background on structured decompositions and we prove that they yield Grothendieck topologies (Corollary~\ref{corollary:decomps-as-topologies}). We establish our algorithmic meta theorem (Theorem~\ref{thm:filtering-algorithm}) in Section~\ref{sec:deciding-sheaves} and we provide a discussion of our practical implementation~\cite{str_decomps_julia_implementation} of these results in AlgebraicJulia in Section~\ref{sec:implementation}. Finally we provide a discussion of our contributions and of opportunities for future work in Section~\ref{sec:discussion}.

\paragraph{Acknowledgements:} The authors would like to thank Evan Patterson for thought-provoking discussions which influenced the development of this paper and for providing key insights that improved the implementation of structured decompositions in AlgebraicJulia~\cite{evan_patterson_algebraicjuliacatlabjl_2020}. 

\subsection{Notation} Throughout the paper we shall assume familiarity with typical category theoretic notation as that found in Riehl's textbook~\cite{riehl2017category}. We use standard notation from sheaf theory and for any notion not explicitly defined here or in Appendix~\ref{appendix:sheaf-theory}, we refer the reader to Rosiak's textbook~\cite{rosiak-book}. Finally, for background on graph theory we follow Diestel's notation~\cite{Diestel2010GraphTheory} while we use standard terminology and notions from parameterized complexity theory as found for example in the textbook by Cygan et al.~\cite{cygan2015parameterized}.

\section{Computational Problems as Functors}\label{sec:problems-as-functors}
Computational problems are assignments of data -- thought of as \textit{solution spaces} -- to some class of input objects. We think of them as functors \(\Fa \colon \cat{C} \to \cat{Sol}\) taking objects of some category \(\cat{C}\) to objects of some appropriately chosen category \(\cat{Sol}\) of solution spaces. Typically, since solution spaces are prohibitively large, rather than computing the entire solution space, one instead settles for more approximate representations of the problem in the form of decision or optimization or enumeration problems etc.. For instance one can formulate \textit{an} \define{\(\Fa\)-decision problem} as a composite of the form \[\cat{C} \xrightarrow{\Fa} \cat{Sol} \xrightarrow{\decision} \mathbf{2}\] where \(\decision\) is a functor into \(\mathbf{2}\) (the walking arrow category) mapping solutions spaces to either \(\bot\) or \(\top\) depending on whether they witness yes- or no-instances to \(\Fa\). 

\paragraph{Some examples} familiar to graph theorists are \(\textsc{GaphColoring}_n\), \(\textsc{VertexCover}_k\) and Odd Cycle Transversal (denoted \(\textsc{OTC}_k\))
which can easily be encoded as contravariant functors into the category \(\finset\) of finite sets, as we shall now describe. 

\(\textsc{GraphColoring}_n\) is the easiest problem to define; it is simply the representable functor
\(\cat{SimpFinGr}(-, K_n) \colon \cat{SimpFinGr}^{op} \to \finset\) taking each graph \(G\) to the set of all homomorphisms from \(G\) to the n-vertex irreflexive complete graph. One turns this into decision problems by taking  \( \decision \colon \finset \to \mathbf{2} \) to be functor which takes any set to \(\bot\) if and only if it is empty.

For \(\textsc{VertexCover}_k\) and \(\textsc{OTC}_k\) we will instead work with the subcategory \[\cat{SimpFinGr}_{mono} \hookrightarrow \cat{SimpFinGr}\] of finite simple graphs and monomorphisms (subgraphs) between them. Notice that, if \(H'\) is a subgraph of a graph \(G'\) -- witnessed by the injection \(g \colon H' \hookrightarrow G'\) -- which satisfies some subgraph-closed property \(P\) of interest and if \(f \colon G \hookrightarrow G'\) is any monomorphism, then the pullback of \(g\) along \(f\) will yield a subgraph of \(G\) which also satisfies property \(P\) (since \(P\) is subgraph-closed). In particular this shows that, for any such subgraph-closed property \(P\) (including, for concreteness, our two examples of being either a vertex cover or an odd cycle transversal of size at most \(k\)) the following is a contravariant `functor by pullback' into the category of posets
\[F_P \colon \cat{SimpFinGr}_{mono} \to \cat{Pos}.\]
Letting \(U\) being the forgetful functor taking posets to their underlying sets, then we can state the corresponding decision problems (including, in particular \(\textsc{VertexCover}_k\) and \(\textsc{OCT}_k\)) as the following composite \[\cat{SimpFinGr}_{mono}^{op} \xrightarrow{F_P} \cat{Pos} \xrightarrow{U} \finset \xrightarrow{\decision} \mathbf{2}.\]

\section{Compositional Data \& Grothendieck Topologies}
\label{sec:compositional-data}
Parameterized complexity~\cite{cygan2015parameterized} is a two-dimensional framework for complexity theory whose main insight is that one should not analyze running times only in terms of the total input size, but also in terms of other \textit{parameters} of the inputs (such as measures of \textit{compositional structure}~\cite{cygan2015parameterized}). Here we represent compositional structure via \textit{diagrams}: we think of an object \(c \in \cat{C}\) obtained as the colimit of a diagram \(d \colon \cat{J} \to \cat{C}\) as being \textit{decomposed} by \(d\) into smaller constituent pieces. This section is split into two parts. In Section~\ref{sec:diagrams-as-gr-topos} we will show that diagrams yield Grothendieck topologies on adhesive categories. Then in Section~\ref{sec:structured-decomps} we focus on a special class of diagrams (suited for algorithmic manipulation) called \textit{structured decompositions}~\cite{structured-decompositions}. Roughly they consist of a collection of objects in a category and a collection of \textit{spans} which relate these objects (just like the edges in a graph relate its vertices).

\subsection{Diagrams as Grothendieck Topologies}\label{sec:diagrams-as-gr-topos}
In this section we will need \textit{adhesive categories}. We think of these as categories of data ``in which pushouts of monomorphisms exist and <<behave more or less as they do in the category of sets>>, or equivalently in any topos.''~\cite{nlab-adhesive-cat} Adhesive categories encompass many mathematical structures raging from topological spaces (or indeed any \textit{topos)} to familiar combinatorial structures such as: \begin{enumerate*} \item databases,\item simple graphs, \item directed graphs, \item directed multigraphs, \item hypergraphs, \item directed hypergraphs (i.e. Petri nets), \item simplicial complexes, \item circular port graphs~\cite{libkind2021operadic} and \item half-edge graphs \end{enumerate*}.

\begin{definition}[Adhesive category~\cite{LackAdhesive}]\label{def:adhesive}
A category \(\cat{C}\) is said to be \define{adhesive} if
\begin{enumerate}
    \item \(\cat{C}\) has pushouts along monomorphisms;
    \item \(\cat{C}\) has pullbacks;
    \item pushouts along monomorphisms are van Kampen.
\end{enumerate}
In turn, a pushout square of the form  
\[\begin{tikzcd}
	&& C \\
	A &&& B \\
	& D
	\arrow["g", from=2-1, to=3-2]
	\arrow["n"', from=2-4, to=3-2]
	\arrow["f"', from=1-3, to=2-4]
	\arrow["m"{pos=0.4}, from=1-3, to=2-1]
\end{tikzcd}\] 
is said to be \define{van Kampen} if, for any commutative cube as in the following diagram which has the above square as its bottom face, 
\[\begin{tikzcd}
	&& {C'} \\
	{A'} &&& {B'} \\
	& {D'} & C \\
	A &&& B \\
	& D
	\arrow["g", from=4-1, to=5-2]
	\arrow["n"', from=4-4, to=5-2]
	\arrow["f"', from=3-3, to=4-4]
	\arrow["m"{pos=0.4}, from=3-3, to=4-1]
	\arrow["a", from=2-1, to=4-1]
	\arrow["d", from=3-2, to=5-2]
	\arrow["c", from=1-3, to=3-3]
	\arrow["b", from=2-4, to=4-4]
	\arrow["{g'}", from=2-1, to=3-2]
	\arrow["{n'}"'{pos=0.6}, from=2-4, to=3-2]
	\arrow["{f'}", from=1-3, to=2-4]
	\arrow["{m'}"', from=1-3, to=2-1]
\end{tikzcd}\]
the following holds: if the back faces are pullbacks, then the front faces are pullbacks if and only if the top face is a pushout.
\end{definition}

We will concern ourselves with diagrams whose morphisms are monic; we call these \textit{submonic diagrams}.

\begin{definition}\label{def:monic-diagram}
    A \define{submonic diagram in \(\cat{C}\) of shape \(\cat{J}\)} is a diagram \(d: \cat{J} \to \cat{C}\) which preserves monomorphisms and whose domain is a finite category with all arrows monic. We say that a category \(\cat{C}\) is \define{adhesively cocomplete} if (1) it is adhesive and (2) every submonic diagram in \(\cat{C}\) has a colimit.
\end{definition}

The following result (\cite[Lemma 5.10]{structured-decompositions} restated here for convenience) states that we can associate -- in a functorial way -- to each object \(c\) of an adhesively cocomplete category \(\cat{C}\) the set of submonic diagrams whose colimit is \(c\).

\begin{lemma}[\cite{structured-decompositions}]\label{lemma:monic-diagrams-functorial-by-pointwise-pullback}
Let $\cat{C}$ be a small adhesively cocomplete category. For any arrow \(f \colon x \to y\) in \(\cat{C}\) and any diagram \(d_y \colon \cat{J} \to \cat{C}\) whose colimit is \(y\) we can obtain a diagram \(d_x \colon \cat{J} \to \cat{C}\) whose colimit is \(x\) by point-wise pullbacks of \(f\) and the arrows of the colimit cocone \(\Lambda\) over \(d_y\).
\[\begin{tikzcd}
	{X = \mathsf{colim} d_x} & {y = \mathsf{colim} d_y} \\
	{d_x J} & d_yJ
	\arrow["\Lambda"', Rightarrow, from=2-2, to=1-2]
	\arrow["f", from=1-1, to=1-2]
	\arrow[Rightarrow, from=2-1, to=2-2]
	\arrow[Rightarrow, from=2-1, to=1-1]
	\arrow["\lrcorner"{anchor=center, pos=0.125, rotate=90}, draw=none, from=2-1, to=1-2]
\end{tikzcd}\]
\end{lemma}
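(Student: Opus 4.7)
The plan is to construct the diagram $d_x$ pointwise by pullback, verify it is submonic, and then invoke the van Kampen / adhesivity property to conclude that its colimit is $x$.

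First I would define $d_x$ on objects: for each $J \in \cat{J}$, form the pullback square
\[\begin{tikzcd}
d_x J \ar[r] \ar[d] & d_y J \ar[d, "\Lambda_J"] \\
x \ar[r, "f"] & y
\end{tikzcd}\]
which exists because $\cat{C}$ has pullbacks. To define $d_x$ on a morphism $\alpha \colon J \to J'$ of $\cat{J}$, I would use the universal property of the pullback defining $d_x J'$: the two composites $d_x J \to d_y J \xrightarrow{d_y\alpha} d_y J'$ and $d_x J \to x$ assemble into a cone over the cospan defining $d_x J'$, since $\Lambda_{J'} \circ d_y\alpha = \Lambda_J$ (because $\Lambda$ is a cocone). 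Functoriality of $d_x$ and the fact that the induced maps $d_x J \to x$ form a cocone then follow from the uniqueness part of the universal property. Monomorphism-preservation of $d_x$ follows from the standard fact that pullbacks of monos along arbitrary maps are mono, so $d_x$ is indeed a submonic diagram.

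The main obstacle is showing that $\mathsf{colim}\, d_x \cong x$. This is where adhesivity must do real work. The idea is to exhibit, for every arrow $\alpha \colon J \to J'$ of $\cat{J}$, a commutative cube whose bottom face is a colimit-defining pushout in $d_y$ and whose top face is the corresponding square in $d_x$; the back and front vertical faces are pullback squares involving $f$ and the cocone maps $\Lambda_J$, $\Lambda_{J'}$, and these are pullbacks by construction. Since $\cat{C}$ is adhesive (so its pushouts of monomorphisms are van Kampen) and $d_y$ is submonic, the van Kampen condition transfers the colimit status of the bottom face to the top face. Iterating/assembling this over all morphisms of the shape category $\cat{J}$, the universal cocone under $d_y$ with apex $y$ pulls back to a universal cocone under $d_x$ with apex $x$.

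To wrap up, I would assemble the local van Kampen arguments into a global statement: for any cocone $(\mu_J \colon d_x J \to z)_{J \in \cat{J}}$ under $d_x$, pushing out the cube faces along the pullback projections $d_x J \to d_y J$ and using adhesivity yields a cocone under $d_y$ into some auxiliary object, and the universal property of $\mathsf{colim}\, d_y = y$ combined with the pullback defining $x$ produces the unique factorization through $x$. A cleaner packaging is to cite the standard consequence of adhesivity that, for submonic diagrams, colimits are stable under pullback (a pointwise van Kampen property for diagrams, extending the square case in Definition~\ref{def:adhesive}); this immediately gives $\mathsf{colim}\, d_x \cong f^*(\mathsf{colim}\, d_y) \cong f^*(y) \cong x$, and the pullback cocone realizing this isomorphism is precisely the one built in the first step. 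The subtlety to check is that submonicity of $d_y$ is what allows each pushout square witnessing the colimit to satisfy the van Kampen hypothesis (pushouts \emph{along monomorphisms}), which is why the hypothesis that $d_y$ is submonic (inherited from Definition~\ref{def:monic-diagram}) is essential.
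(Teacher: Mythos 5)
Note first that the paper does not actually prove this statement: it is imported verbatim as \cite[Lemma 5.10]{structured-decompositions}, so there is no internal proof to compare against, and your attempt must stand on its own.

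The first half of your proposal is fine: the pointwise construction of $d_x$ by pulling back the cocone legs along $f$, the definition of $d_x$ on arrows via the universal property (using $\Lambda_{J'}\circ d_y\alpha=\Lambda_J$), and monicity of $d_x\alpha$ (it is, by pullback pasting, a pullback of the mono $d_y\alpha$) are all correct and routine. The genuine gap is in the only nontrivial step, the claim $\mathsf{colim}\,d_x\cong x$. Your ``cleaner packaging'' --- citing as a standard consequence of adhesivity that colimits of submonic diagrams are stable under pullback --- is circular: that stability statement \emph{is} the lemma being proved (it is exactly the content of the cited Lemma 5.10), and it is not part of Definition~\ref{def:adhesive}, which only asserts the van Kampen property for \emph{pushout squares} along monomorphisms. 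Your non-packaged version fares no better as stated: for an arrow $\alpha\colon J\to J'$ of $\cat{J}$, the square formed by $d_y\alpha$ and the cocone legs $\Lambda_J,\Lambda_{J'}$ is merely commutative, not a pushout (already for a span shape, the squares containing the coprojections into the pushout are not pushouts), so the van Kampen axiom cannot be applied face-by-face to these cubes, and ``iterating/assembling over all morphisms'' has no justified content.

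A correct argument along your lines must instead present $\mathsf{colim}\,d_y$ as an \emph{iterated} pushout along monomorphisms (using adhesive cocompleteness and the fact that pushouts along monos preserve monos, so the intermediate coprojections stay monic), apply the van Kampen cube at each such genuine pushout stage to transfer it across the pullback along $f$, and check by induction that the pulled-back intermediate objects agree with the pullbacks of the intermediate colimits; one must also address that a general submonic shape (a finite category with all arrows monic) can force coequalizers of parallel monos, which the pushout-only van Kampen condition does not directly cover. None of this bookkeeping is present, so as written the proof does not go through.
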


Note that we cannot do without requiring the diagrams to be monic in Lemma~\ref{lemma:monic-diagrams-functorial-by-pointwise-pullback} since, although adhesive categories have all pushouts of monic spans, they need not have pushouts of arbitrary spans (see Definition~\ref{def:adhesive}). 

As we shall now see, adhesively cocomplete categories have just enough structure for us to define a Grothendieck topologies where covers are given by colimits of monic subcategories~\footnote{For the reader concerned with size issues, observe that: (1) given categories \(\cat{C}\) and \(\cat{D}\), the functor category \([C,D]\) is small whenever \(\cat{C}\) and \(\cat{D}\) also are; and (2) since we are concerned with diagrams whose domains have finitely many objects and morphisms, one has that the collection of diagrams which yield a given object as a colimit is indeed a set.}. We will use this result to prove (Corollary~\ref{corollary:decomps-as-topologies}) that structured decompositions yield the desired Grothendieck topologies.

\begin{theorem}\label{thm:diagrams-as-topologies}
If \(\cat{C}\) is a small adhesively cocomplete category, then, denoting by \(\Lambda_d\) the colimit cocone of any diagram \(d\), the following is a functor by pullback. 
\begin{align*}
    \submon &\colon \cat{C}^{op} \to \mathbf{Set} \\
    \submon &\colon c \mapsto \{\Lambda_d \mid d  \text{ a submonic diagram and } \mathsf{colim} d = c\} \nonumber \\ &\quad \quad \quad \bigcup \{\{f\} \mid f \colon a \xrightarrow{\cong} c \text{ an iso }\}.
\end{align*}
Furthermore we have that: 
\begin{itemize}
    \item for any such \(\cat{C}\) the pair \((\cat{C}, \: \submon)\) is a site and
    \item denoting by \(\cat{C}_{mono}\) the subcategory of \(\cat{C}\) having the same objects of \(\cat{C}\), but only the monomorphisms of \(\cat{C}\), we have that the following pair is a site \[\bigl( \cat{C}_{mono}, \: \submon|_{\cat{C}_{mono}} \bigr).\]
\end{itemize}
\end{theorem}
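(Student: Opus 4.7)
My plan is to verify that $\submon$ is a contravariant functor and that the pair $(\cat{C}, \submon)$ satisfies the axioms of a Grothendieck pretopology (from which the claimed site structure will follow, since a pretopology generates a topology via its induced sieves). First I would define $\submon(f)$ on cocones via the pointwise-pullback construction of Lemma~\ref{lemma:monic-diagrams-functorial-by-pointwise-pullback}, and then verify that $\submon(gf) = \submon(f) \circ \submon(g)$ and $\submon(\id_c) = \id$ by applying the pasting lemma for pullbacks componentwise across each object of the indexing category, together with the uniqueness of pullbacks up to canonical isomorphism. At this point I would also check that the pulled-back diagram is itself submonic; this reduces to the standard fact that pullbacks of monomorphisms along arbitrary morphisms are monomorphisms in any category with pullbacks.

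Next I would verify the three pretopology axioms. The identity axiom is immediate since $\{\id_c\}$ is by construction a singleton cover in $\submon(c)$. Pullback stability is precisely the content of Lemma~\ref{lemma:monic-diagrams-functorial-by-pointwise-pullback}. Transitivity is the main technical step: given a cover $\Lambda_d \in \submon(c)$ arising from a submonic $d : \cat{J} \to \cat{C}$ and a refining family of covers $\Lambda_{d_J} \in \submon(dJ)$ arising from submonic $d_J : \cat{K}_J \to \cat{C}$ (one for each $J \in \obj \cat{J}$), one must exhibit a single submonic diagram $D : \cat{K} \to \cat{C}$ whose colimit cocone is (up to canonical iso) the composite family. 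I would build $\cat{K}$ via a Grothendieck-style construction whose objects are pairs $(J, K)$ with $K \in \obj \cat{K}_J$ and whose morphisms combine the morphisms inside each $\cat{K}_J$ with connecting morphisms between fibres, where the latter are induced, via Lemma~\ref{lemma:monic-diagrams-functorial-by-pointwise-pullback} applied to each $d(\alpha): dJ \to dJ'$, by pullback against the cocone of $d_{J'}$. The colimit of the resulting $D$ recovers $c$ by the interchange-of-colimits identity $\colim_{(J,K)} D(J,K) = \colim_J \colim_K d_J(K) = \colim_J dJ = c$.

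For the second bullet point, concerning the restriction to $\cat{C}_{mono}$, I would use the standard fact that in an adhesive category pushouts of monomorphisms have monic cocone legs; by an inductive argument on the finite construction of colimits of submonic diagrams this guarantees that every leg of $\Lambda_d$ is a monomorphism, so $\Lambda_d$ lives in $\cat{C}_{mono}$. The three axioms then restrict cleanly because the pullback of a monomorphism is again a monomorphism, so the pullback operation of Lemma~\ref{lemma:monic-diagrams-functorial-by-pointwise-pullback} preserves the property of lying in $\cat{C}_{mono}$.

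The hardest step will be transitivity: producing the combined diagram $D$ with its correct colimit requires compatibility between the connecting morphisms between fibres and the internal morphisms of each $\cat{K}_J$, and one must verify that these choices do not depend on ancillary data. I anticipate leaning on the van Kampen property of Definition~\ref{def:adhesive} to secure the required pullback-of-pushout coherence, and on Lemma~\ref{lemma:monic-diagrams-functorial-by-pointwise-pullback} to ensure that the assembled diagram is indeed submonic and that the interchange of colimits goes through.
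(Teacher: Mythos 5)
Your proposal is correct and follows essentially the same route as the paper: functoriality and pullback-stability via Lemma~\ref{lemma:monic-diagrams-functorial-by-pointwise-pullback}, reduction to the pretopology axioms of Definition~\ref{def:pre-topology}, transitivity via the fact that a colimit of colimits is a colimit (which you merely make more explicit through your Grothendieck-construction indexing), and monic colimit-cocone legs in the adhesive setting for the restriction to \(\cat{C}_{mono}\). The only point you leave implicit is the transitivity case in which one of the covers is a singleton isomorphism cover, which the paper treats as a separate (easy) case by composing the monic cocone legs with the isomorphism.
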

\begin{proof}
The fact that \(\submon\) is a contravariant functor by pullback follows from the observation that pullbacks preserve isomorphisms together with Lemma~\ref{lemma:monic-diagrams-functorial-by-pointwise-pullback}.

Now notice that it suffices to show that \(\submon\) defines a Grothendieck pre-topology (Definition~\ref{def:pre-topology}) since it is known that every Grothendieck pre-topology determines a genuine Grothendieck topology~\cite{rosiak-book}. We do this only for the first case (the second case is proved analogously) and to that end we proceed by direct verification of the axioms (consult Definition~\ref{def:pre-topology} in Appendix~\ref{appendix:sheaf-theory}). First of all note that Axiom~\ref{axiom:pretopology-1} holds trivially by the definition of \(\submon\). It is also immediate that Axiom~\ref{axiom:pretopology-2} holds since we have just shown in the first part of this theorem that \(\submon\) is a contravariant ``functor by pullback''. Axiom~\ref{axiom:pretopology-3} is also easily established as follows: 
\begin{itemize}
    \item since a diagram of diagrams is a diagram, we have that a colimit of colimits is a colimit and thus the resulting colimit cocone is in the cover; and
    \item if we are given any singleton cover consisting of an isomorphism \(\{f \colon b \xrightarrow{\cong} c\}\), then there are two cases: \begin{enumerate*}
    \item we are given another such cover \(f' \colon a \xrightarrow{\cong} b\) then the composite \(ff'\) is an isomorphism into \(c\) and thus is in the coverage and 
    \item if the cover on \(b\) is a colimit cocone \(\Lambda_d\) of submonic diagram \(d\), then the composition of the components of the cocone (which are monomorphism) with isomorphism \(f\) determines a diagram of subobjects of \(c\) and this diagram will have \(c\) as its colimit, as desired.
    \end{enumerate*}
\end{itemize}
\end{proof}

Recall from Section~\ref{sec:introduction} that some computational problems (e.g. \(\textsc{VertexCover}_k\) and \(\textsc{OCT}_k\)) are stated on the category \(\cat{SimpFinGr}_{mono}\) of finite simple graphs and the monomorphisms between them. Indeed note that, if we were to attempt to state these notions on the category \(\cat{SimpFinGr}\), then we would lose the ability to control the cardinalities of the vertex covers (resp. odd cycle transversals, etc.) since, although the pullback of a vertex cover along an epimorphism is still a vertex cover (resp. odd cycle transversals, etc.), the size of the resulting vertex cover obtained by pullback may increase arbitrarily. Thus, if we want to think of computational problems (the specific ones mentioned above, but also more generally) as sheaves, it is particularly helpful to be able to use structured decompositions as Grothendieck topologies on `categories of monos' (i.e. \(\cat{C}_{mono}\) for some adhesive \(\cat{C}\)) which will fail to have pushouts in general. 

\paragraph{Tangential Remark} Note that similar ideas to those of Theorem~\ref{thm:diagrams-as-topologies} and Corollary~\ref{corollary:decomps-as-topologies} can be used to define sheaves on `synthetic spaces' (i.e. mixed-dimensional manifolds encoded as structured decompositions of manifolds). Here the idea is that one would like to define sheaves on topological spaces obtained by gluing manifolds of varying dimensions together while simultaneously remembering that the resulting object should be treated as a `synthetic mixed-dimensional manifold' (i.e. structured decompositions of manifolds). The fact that one can use structured decompositions as topologies implies that one can speak of sheaves which \textit{are aware} of the fact that the composite topological space needs to be regarded as a mixed-dimensional manifold. This is beyond the scope of the present paper, but it is a fascinating direction for further work.

\subsection{Structured Decompositions}\label{sec:structured-decomps}
For our algorithmic purposes, it will be convenient to use \textit{structured decompositions}, rather than arbitrary diagrams, to define Grothendieck topologies. This is for two reasons: (1) structured decompositions are a class of particularly nice diagrams which are easier to manipulate algorithmically when the purpose is to construct colimits via recursive algorithms and (2) although it is true that, given its colimit cone, one can always turn a diagram into a diagram of spans (as we argue in Corollary~\ref{corollary:decomps-as-topologies}), this operation is computationally expensive (especially when, as we will address later, one is dealing with decompositions of prohibitively large objects such as \textit{solutions spaces}). 

\paragraph{Background on Structured Decompositions} \textit{Structured decompositions}~\cite{structured-decompositions} are category-theoretic generalizations of many combinatorial invariants -- including, for example, tree-width~\cite{Bertele1972NonserialProgramming, halin1976s, RobertsonII}, layered tree-width~\cite{layered-tw-dujmovic2017, layered-tw-shahrokhi2015}, co-tree-width~\cite{co-tw-Sousa}, $\mathcal{H}$-tree-width~\cite{H-tw-conference} and graph decomposition width~\cite{CARMESIN2022101} -- which have played a central role in graph theory and parameterized complexity. 

\begin{figure}[h!]
    \centering\includegraphics[width=0.8\textwidth]{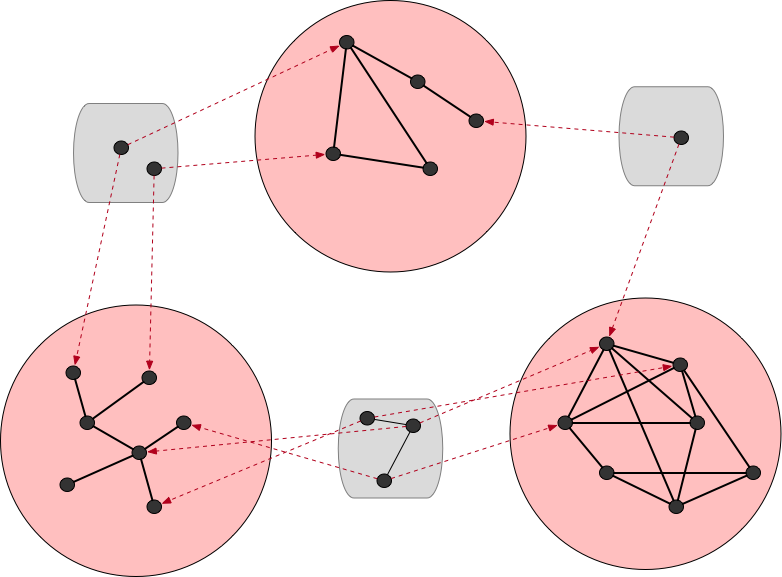}
    \caption{A \(\GRhomo\)-valued structured decomposition of shape \(K_3\). The spans (which in the figure above are monic) are drawn (componentwise) dotted in red. The bags are highlighted in pink and the adhesions are highlighted in gray.}
    \label{fig:str-decomp-example}
\end{figure}

Intuitively structured decompositions should be though of as generalized graphs: whereas a graph is a \textit{relation} on the \textit{elements} of a \textit{set}, a structured decomposition is a \textit{generalized relation} (a collection of \textit{spans}) on a collection of \textit{objects} of a \textit{category}. For instance consult Figure~\ref{fig:str-decomp-example} for a drawing of a graph-valued structured decomposition of shape \(K_3\) (the complete graph on three vertices).

Formally, for any category \(\cat{C}\), one defines \(\cat{C}\)-valued structured decompositions as certain kinds of diagrams in \(\cat{C}\) as in the following definition. For the purposes of this paper, we will only need the special case of structured decompositions given by diagrams whose arrows are all monic; however, for completeness, we note that the theory of structured decompositions does not rely on such a restriction. Thus, to ease legibility and since we will only work with monic structured decompositions in the present document, we will drop the adjective `monic' and instead speak of \textit{structured decompositions} (or simply `decompositions').

\begin{definition}[Monic Structured Decomposition]\label{def:str-decomps-obs}
Given any finite graph \(G\) viewed as a functor \(G \colon \cat{GrSch} \to \set\) where \(\cat{GrSch}\) is the two object category generated by
\[
\begin{tikzcd}
	E & V
	\arrow["s", shift left=1, from=1-1, to=1-2]
	\arrow["t"', shift right=1, from=1-1, to=1-2]
\end{tikzcd}
\] one can construct a category \(\int G\) with an object for each vertex of \(G\) and an object from each edge of \(G\) and a span joining each edge of \(G\) to each of its source and target vertices. This construction is an instance of the more general notion of Grothendieck construction. Now, fixing a base category $\cat{K}$ we define a \define{$\cat{K}$-valued structured decomposition of shape $G$} (see Figure~\ref{fig:str-decomp-example}) as a diagram of the form \[d \colon \int G \to \cat{K}\] whose arrows are all \textit{monic} in \(\cat{K}\). Given any vertex $v$ in $G$, we call the object $d v$ in $\cat{K}$ the \define{bag of $d$ indexed by $v$}. Given any edge $e = xy$ of $G$, we call \(de\) the \define{adhesion} indexed by \(e\) and we call the span $dx \leftarrow {d e} \rightarrow dy$ the \define{adhesion-span indexed by $e$}.
\end{definition}

\begin{definition}[Morphisms of Structured Decompositions]\label{def:str-decomps-homs}
A \define{morphism of $\cat{K}$-valued structured decompositions from $d_1$ to $d_2$} is a pair \[(F, \eta) \colon \bigl(\int G_1 \xrightarrow{d_1} \cat{K}\bigr) \to \bigl(\int G_2 \xrightarrow{d_2} \cat{K}\bigr)\] as in the following diagram where $F$ is a functor $F: \int G_1 \to \int G_2$ and $\eta$ is a natural transformation $\eta : d_1 \Rightarrow d_2 F$ as in the following diagram.
\[\begin{tikzcd}
	{\int G_1} && {\int G_2} \\
	&& {} \\
	& {\cat{K}}
	\arrow[""{name=0, anchor=center, inner sep=0}, "{d_1}"', from=1-1, to=3-2]
	\arrow["{d_2}", from=1-3, to=3-2]
	\arrow["F", from=1-1, to=1-3]
	\arrow["\eta", shorten <=10pt, shorten >=10pt, Rightarrow, from=0, to=1-3]
\end{tikzcd}\]
\end{definition}

The point of such abstraction is that one can now speak in a unified way about decompositions of many different kinds of objects. Indeed this is precisely our approach in this paper: we relate the compositional structure of the \textit{inputs} of a computational problem to a corresponding compositional structure of the \textit{solutions spaces} of the problem. To do so, we leverage the fact that the construction of the category of structured decompositions~\cite[Prop. 3.3]{structured-decompositions} is functorial~\cite[Corol. 3.4]{structured-decompositions}. In particular there is a functor \[\sd \colon \cat{Cat} \to \cat{Cat}\] taking any category \(\cat{C}\) to the category \(\sd \cat{C}\) of \(\cat{C}\)-valued structured decompositions\footnote{In order to maintain notational consistency with the prequel~\cite{structured-decompositions} and in order to remind the reader that we are only working throughout with \textit{monic} structured decompositions, we will keep the subscript \(m\) in the notation of the category \(\sd \cat{C}\).}; this is summarized in Definition~\ref{def:cat-str-decomps} below.

\begin{definition}[The category of Structured Decompositions]\label{def:cat-str-decomps}
\textbf{Category of Decompositions.} Fixing a category $\cat{K}$, $\cat{K}$-valued structured decompositions (of any shape) and the morphisms between them (as in Definition~\ref{def:str-decomps-obs}) form a category $\sd(\cat{K})$ called the \emph{category of $\cat{K}$-valued structured decompositions}. Furthermore, this construction is functorial: there is a functor $\sd : \CAT \to \CAT$ which takes any category $\cat{K}$ to the category $\sd(\cat{K})$ and every functor $\Phi: \cat{K} \to \cat{K}'$ to the functor $\sd(\Phi): \sd(\cat{K}) \to \sd(\cat{K}')$ defined on objects as 
\[\sd(\Phi) \colon \bigl(\int G \xrightarrow{d} \cat{K} \bigr ) \mapsto \bigl( \sd(\Phi) d \colon  \int G \xrightarrow{d} \cat{K} \xrightarrow{\Phi} \cat{K}' \bigr ) \] 
and on arrows as 
\[
\adjustbox{scale=0.8}{
\begin{tikzcd}
	{\int G_1} && {\int G_2} && {\int G_1} && {\int G_2} \\
	&& {} & {} & {} &&& {} \\
	& {\cat{K}} &&&& {\cat{K}'} & {.}
	\arrow[""{name=0, anchor=center, inner sep=0}, "{\sd_\Phi d_1}"', from=1-5, to=3-6]
	\arrow["{\sd(\Phi) d_2}", from=1-7, to=3-6]
	\arrow["F", from=1-5, to=1-7]
	\arrow[""{name=1, anchor=center, inner sep=0}, "{d_1}"', from=1-1, to=3-2]
	\arrow["{d_2}", from=1-3, to=3-2]
	\arrow["F", from=1-1, to=1-3]
	\arrow["{\sd(\Phi) }", maps to, from=2-3, to=2-5]
	\arrow["\Phi\eta", shorten <=13pt, shorten >=13pt, Rightarrow, from=0, to=1-7]
	\arrow["\eta", shorten <=8pt, Rightarrow, from=1, to=1-3]
\end{tikzcd}
}
\]
\end{definition}

The fact that \(\sd\) is a functor means that every functor \(F \colon \cat{C} \to \cat{Sol}\) which encodes some computational problem lifts to a functor \[\sd_F \colon \sd \cat{C} \to \sd \cat{Sol}\] taking structured decompositions of the \textit{inputs to a computational problem} (i.e. decompositions valued in \(\cat{C}\)) to those valued in the \textit{solution spaces of the problem} (i.e. decompositions valued in \(\cat{Sol}\)) by evaluating \(F\) on the bags and adhesions of the objects of \(\sd \cat{C}\). We will make use of this fact in Section~\ref{sec:deciding-sheaves} to easily encode local computations on the bags and adhesions of decompositions and for the implementations of our results (see Section~\ref{sec:implementation}).

\paragraph{Structured Decompositions as Grothendieck Topologies}

We are finally ready to show that structured decompositions yield Grothendieck topologies on adhesive categories. We will proceed in much the same way as we did for arbitrary submonic diagrams. Indeed observe~\cite[Corol. 5.11]{structured-decompositions} that, for any small adhesively cocomplete category \(\cat{C}\), one can define (again by point-wise pullbacks as in Theorem~\ref{thm:diagrams-as-topologies}) a subfunctor \(\decomp\) of the functor \(\submon\) (defined in Theorem~\ref{thm:diagrams-as-topologies}) as follows (where we once again denote by \(\Lambda_d\) the set of arrows in the colimit cocone over \(d\)).
\begin{align}
        \decomp &\colon \cat{C}^{op} \to \mathbf{Set} \label{def:decomp_functor}\\
        \decomp &\colon c \mapsto \{\Lambda_d \mid \colim d = c \text{ and } d \in \sd\cat{C}\} \bigcup \{\{f\} \mid f \colon a \xrightarrow{\cong} c \text{ an iso }\}. \nonumber
\end{align}

\begin{corollary}\label{corollary:decomps-as-topologies}
The pairs \((\cat{C}, \decomp)\) and  \(\bigl( \cat{C}_{mono}, \: \decomp|_{\cat{C}_{mono}} \bigr)\) are sites where \(\cat{C}\) is any small adhesively cocomplete category  and \(\decomp\) is the functor of Equation~\eqref{def:decomp_functor}.
\end{corollary}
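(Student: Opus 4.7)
The plan is to mirror the proof of Theorem~\ref{thm:diagrams-as-topologies}, exploiting the fact (noted just before the statement of the corollary) that $\decomp$ is a subfunctor of $\submon$. The functoriality of $\decomp$ itself is immediate from Lemma~\ref{lemma:monic-diagrams-functorial-by-pointwise-pullback}: taking pointwise pullbacks of a diagram $d \colon \int G \to \cat{C}$ along $f \colon x \to c$ again produces a diagram of shape $\int G$ (the shape is unchanged), and the resulting arrows remain monic since pullbacks preserve monomorphisms in any category. Thus it suffices to verify that $\decomp$ satisfies the three Grothendieck pre-topology axioms (Definition~\ref{def:pre-topology}) when the covers on each object are restricted to colimit cocones of structured decompositions together with singleton iso-covers.

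Axioms~\ref{axiom:pretopology-1} and \ref{axiom:pretopology-2} transfer essentially verbatim from the proof of Theorem~\ref{thm:diagrams-as-topologies}: singleton iso-covers are in $\decomp(c)$ by definition, and pullback stability is exactly the functoriality discussed above. The only nontrivial step is Axiom~\ref{axiom:pretopology-3} (closure under composition), which is also the main obstacle. For arbitrary submonic diagrams it suffices to say that ``a diagram of diagrams is a diagram''; but here we must additionally produce a single graph $G'$ such that a structured decomposition of a structured decomposition again has shape $\int G'$ for some graph $G'$.

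To handle the composition axiom, the plan is the following substitution construction. Given a structured decomposition $d \colon \int G \to \cat{C}$ with $\colim d = c$, and for each vertex $v$ of $G$ either a singleton iso-cover of $dv$ or a structured decomposition $d_v \colon \int G_v \to \cat{C}$ with $\colim d_v = dv$, form a graph $G'$ as follows. Start with the disjoint union of all $G_v$ (treating an iso-cover of $dv$ as the trivial one-vertex graph decomposing $dv$). For each edge $e = uv$ of $G$ with adhesion object $de$, apply Lemma~\ref{lemma:monic-diagrams-functorial-by-pointwise-pullback} to pull the covers $\Lambda_{d_u}$ and $\Lambda_{d_v}$ back along the adhesion maps $de \to du$ and $de \to dv$; this yields compatible diagrams covering $de$, whose bags determine edges of $G'$ linking the appropriate vertices of $G_u$ and $G_v$. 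The resulting diagram $d' \colon \int G' \to \cat{C}$ has $\colim d' = c$ by the general interchange principle that a colimit of colimits is a colimit, exactly as invoked in Theorem~\ref{thm:diagrams-as-topologies}. The composition case in which one of the two covers is a singleton iso is handled identically to that theorem: precomposing the components of $\Lambda_d$ with an iso into $c$ preserves both monicity and the shape $\int G$, placing the resulting cocone back in $\decomp(c)$.

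Finally, the restriction to $\cat{C}_{mono}$ requires no additional argument: by Definition~\ref{def:str-decomps-obs} every arrow in a structured decomposition is already monic, so the substitution construction above lives in $\cat{C}_{mono}$ whenever its inputs do, and all pullbacks invoked are pullbacks of monomorphisms, which are again monomorphisms. Hence $\bigl(\cat{C}_{mono},\, \decomp|_{\cat{C}_{mono}}\bigr)$ is a site as well.
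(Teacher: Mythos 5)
Your overall strategy differs from the paper's: you re-verify the pre-topology axioms for \(\decomp\) directly, whereas the paper reduces the corollary to Theorem~\ref{thm:diagrams-as-topologies} by observing that every structured decomposition is a submonic diagram and, conversely, that any submonic-diagram cover can be converted into a structured-decomposition cover \emph{with the same colimit} by taking pairwise pullbacks of the (monic) colimit cocone arrows -- a step whose validity rests squarely on adhesivity. This difference matters, because the crux of your Axiom~\ref{axiom:pretopology-3} argument is exactly the point your proposal leaves unproved. Your substitution construction replaces the gluing of \(du\) and \(dv\) along the whole adhesion \(de\) by a family of \emph{binary} adhesions obtained as pairwise pullbacks of the bags of \(d_u\) and \(d_v\) over \(de\); the claim \(\colim d' = c\) is then justified by ``a colimit of colimits is a colimit.'' But that interchange principle only applies when the composite diagram genuinely refines the original indexing (as in Theorem~\ref{thm:diagrams-as-topologies}, where arbitrary finite shapes are allowed); here the flattened shape \(\int G'\) forgets \(de\) as a single gluing object and retains only pairwise intersections, and the assertion that gluing along pairwise pullbacks of monos still yields \(c\) is precisely the van Kampen/adhesivity fact the paper invokes (``one can recover a structured decomposition with the same colimit \emph{because} \(\cat{C}\) is adhesive''). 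Without an argument at this point -- essentially reproving the paper's conversion lemma -- your verification of Axiom~\ref{axiom:pretopology-3} is incomplete.

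A second, smaller gap: a cover \(\Lambda_d \in \decomp(c)\) consists of the cocone arrows indexed by \emph{all} objects of \(\int G\), i.e.\ by the adhesions \(de\) as well as the bags \(dv\). Axiom~\ref{axiom:pretopology-3} therefore hands you an arbitrary \(\decomp\)-cover of every \(dx\) in the family, including the adhesions, whereas your construction only substitutes decompositions at the bags and manufactures specific covers of the adhesions by pullback. You would need either to handle arbitrary covers of the \(de\) as well, or to argue (e.g.\ at the level of generated sieves, or by following the paper and passing through \(\submon\)) that this restricted verification suffices. The restriction to \(\cat{C}_{mono}\) is handled correctly, but the two points above need to be repaired before the direct-verification route goes through.
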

\begin{proof}
Every \(\cat{C}\)-valued structured decomposition induces a diagram in \(\cat{C}\). Conversely, since \(\cat{C}\) is adhesive, we have that colimit cocones of monic subcategories will consist of monic arrows. Thus, by taking pairwise pullbacks of the colimit arrows, one can recover a structured decomposition with the same colimit (because \(\cat{C}\) is adhesive) and having all adhesions monic.
\end{proof}

In the general case, there are going to be many distinct Grothendieck topologies that we could attach to a particular category, yielding different sites. In building a particular site, we have in mind the sheaves that will be defined with respect to it. It would be nice if we could use some relationship between sites to induce a relationship between the associated sheaves, and reduce our work by lifting the computation of sheaves with respect to one Grothendieck topology from the sheaves on another (appropriately related) Grothendieck topology. As it turns out, the collection of Grothendieck topologies on a category $\cat{C}$ are partially ordered (by inclusion), and established results show us how to exploit this to push sheaves on one topology to sheaves on others. \par 
We might first define maps between arbitrary coverings as follows. 
\begin{definition}
\normalfont 
A \define{morphism of coverings} from $\mathcal{V} = \{V_j \rightarrow V\}_{j \in J}$ to $\mathcal{U} = \{U_i \rightarrow U\}_{i \in I}$ is an arrow $V \rightarrow U$ together with a pair $(\sigma, f)$ comprised of 
\begin{itemize}
	\item a function $\sigma: J \rightarrow I$ on the index sets, and 
	\item a collection of morphisms $f = \{f_j: V_j \rightarrow U_{\sigma(j)}\}_{j \in J}$ with 
		\begin{center}
			\begin{tikzcd}
				V_j \arrow[d] \arrow[r] & U_{\sigma(j)} \arrow[d] \\ 
				V \arrow[r] & U
			\end{tikzcd}
		\end{center}
		commuting for all $j \in J$.
\end{itemize}
\end{definition}
With this notion of morphisms, we could define a category of the coverings on $\cat{C}$. However, we will mostly be concerned with a particular case of such morphisms of coverings, namely where $V = U$ and $V \rightarrow U$ is just the identity map. In this case, we say that $\mathcal{V}$ is a \define{refinement} of $\mathcal{U}$.
\begin{definition}
	\normalfont 
	Let $\cat{C}$ be a category and $\mathcal{U} = \{U_i \rightarrow U\}_{i \in I}$ be a family of arrows. Then a \define{refinement} $\mathcal{V} = \{V_j \rightarrow U\}_{j \in J}$ is a family of arrows such that for each index $j \in J$ there exists some $i \in I$ such that $V_j \rightarrow U$ factors through $U_i \rightarrow U$.
\end{definition}
In sieve terms, we can simplify the above to say that, given $\mathcal{U} = \{U_i \rightarrow U\}$ and $\mathcal{V} = \{V_j \rightarrow U\}$, $\mathcal{V}$ is a refinement of $\mathcal{U}$ if and only if the sieve generated by $\mathcal{V}$ is contained in (a sub-sieve of) the sieve generated by $\mathcal{U}$. \par 
Moreover, observe that any covering is a refinement of itself, and a refinement of a refinement is a refinement. As such, refinement gives us an ordering on the coverings of an object $U$. We can use this to define the following relation for Grothendieck topologies. 
\begin{definition}
	\normalfont 
	 Let $\cat{C}$ be a category and $J, J'$ two Grothendieck topologies on $\cat{C}$. We say that $J$ is \define{subordinate to} $J'$, denoted $J \leqslant J'$, provided every covering in $J$ has a refinement that is a covering in $J'$. \par 
	 If $J \leqslant J'$ and $J' \leqslant J$, then $J$ and $J'$ are \define{equivalent}.
\end{definition}
This really says not just that any covering in $J$ is a covering in $J'$---which makes us say that the topology $J'$ is finer than $J$---but that for any covering in $J$, there exists a refinement among the coverings in $J'$. \par 
In sieve terms, $J \leqslant J'$ just says that for any object $U$ and any sieve $S$ considered a covering sieve by $J$, there exists a sieve $S'$ that is considered a covering sieve by $J'$, where all arrows of $S$ are also in $S'$. In particular, two topologies will be equivalent if and only if they have the same sieves. \par 
Now we come to the point of these definitions. Using the main sheaf definition---i.e., a functor $F$ is a sheaf with respect to a topology $J$ if and only if for any sieve $S$ belonging to $J$ the induced map\footnote{Here we write \(\mathbf{y}_U\) to denote the Yoneda embedding at \(U\).} 
\begin{equation*}
F(U) \simeq \text{Hom}(\mathbf{y}_U, F) \rightarrow \text{Hom}(S, F)
\end{equation*}
is a bijection---together with the previous definition (expressed in terms of sieves), we get the following result.
\begin{proposition}\label{prop:sheaves-on-subordinate-topologies}	Let $J, J'$ be two Grothendieck topologies on a category $\cat{C}$. If $J$ is  subordinate to $J'$, i.e., $J \leqslant J'$, then every presheaf that satisfies the sheaf condition for $J'$ also satisfies it for $J$.
\end{proposition}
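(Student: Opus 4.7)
The plan is to show directly that if $F$ is a $J'$-sheaf, then for every $J$-covering sieve $S$ on an object $U$, the canonical restriction map $F(U) \to \text{Hom}(S, F)$ is a bijection. The hypothesis $J \leqslant J'$ provides, for each such $S$, a $J'$-covering sieve $S'$ that refines $S$, which in sieve terms means $S' \subseteq S$. Since the two restriction maps $F(U) \to \text{Hom}(S, F)$ and $\text{Hom}(S, F) \to \text{Hom}(S', F)$ compose to the bijection $F(U) \to \text{Hom}(S', F)$ supplied by the $J'$-sheaf axiom, the whole argument reduces to showing that the second map, restriction along the inclusion $S' \hookrightarrow S$, is itself injective.

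The key lemma I would prove first is exactly this injectivity of $\text{Hom}(S, F) \to \text{Hom}(S', F)$. Given two matching families $(x_f)_{f \in S}$ and $(y_f)_{f \in S}$ that agree on $S'$, fix $f \colon V \to U$ in $S$ and consider the pullback sieve $f^{*}S'$ on $V$. By the stability axiom of the Grothendieck topology $J'$, the sieve $f^{*}S'$ is a $J'$-cover of $V$. For any $g \colon W \to V$ in $f^{*}S'$, we have $f \circ g \in S'$, so the matching condition yields $F(g)(x_f) = x_{f \circ g} = y_{f \circ g} = F(g)(y_f)$, where the middle equality is precisely agreement on $S'$. Thus $x_f$ and $y_f$ determine the same matching family on the $J'$-cover $f^{*}S'$, and the $J'$-sheaf condition on $V$ forces $x_f = y_f$. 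Since $f \in S$ was arbitrary, the two families coincide.

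The main conclusion now follows by a short diagram chase. Injectivity of $F(U) \to \text{Hom}(S, F)$ is immediate, since its composite with the now-injective restriction to $\text{Hom}(S', F)$ is a bijection. For surjectivity, given an arbitrary matching family $(x_f)_{f \in S}$, restrict it to $S'$, lift that restriction uniquely to some $x \in F(U)$ via the $J'$-sheaf condition, and note that both $(F(f)(x))_{f \in S}$ and the original $(x_f)_{f \in S}$ restrict to the same matching family on $S'$; the key lemma then forces their equality on all of $S$, so $x$ realizes the given matching family.

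The main obstacle is the key lemma itself: the refinement hypothesis only gives $S' \subseteq S$, and a priori agreement on the smaller sieve need not propagate to the larger one. The crucial ingredient is the pullback-stability axiom for the topology $J'$, which upgrades $S'$ to a $J'$-cover of every domain $V$ appearing in $S$ and thereby lets the $J'$-sheaf property act locally at each $f \in S$. Without stability under pullback, neither the uniqueness of extensions along $S' \hookrightarrow S$ nor the canonicity of the lift from $\text{Hom}(S', F)$ back to $\text{Hom}(S, F)$ would be available, and the reduction from $J$-covers to $J'$-refinements would break down.
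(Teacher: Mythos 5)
Your proof is correct, but it follows a genuinely different route from the paper's. The paper disposes of the proposition by appeal to general machinery: the identity functor on $\cat{C}$ is viewed as a cover-preserving morphism of sites $(\cat{C}, J) \rightarrow (\cat{C}, J')$, and the direct-image functor $f_* \colon \cat{Sh}(\cat{C}, J') \rightarrow \cat{Sh}(\cat{C}, J)$ (precomposition restricted to sheaves) pushes any $J'$-sheaf to a $J$-sheaf, with the verification deferred to the literature (Mac Lane--Moerdijk). You instead verify the sheaf condition directly: for a $J$-cover $S$ with a $J'$-cover refinement $S' \subseteq S$, you reduce everything to injectivity of the restriction $\text{Hom}(S,F) \rightarrow \text{Hom}(S',F)$, and you prove that by pulling $S'$ back along each $f \in S$ (pullback stability of $J'$) and invoking the uniqueness half of the $J'$-sheaf condition at the domain of $f$; surjectivity then follows by lifting the restricted family along the $J'$-sheaf bijection and applying your lemma. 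This is the standard elementary argument, it is complete as written, and its advantage is self-containedness: it never needs the theory of continuous functors and direct images (whose cover-preservation hypotheses would in any case be checked by essentially the same refinement argument), and it makes explicit exactly where stability of $J'$ under base change is used. The paper's route buys brevity and situates the result in topos-theoretic machinery that also handles comparisons between genuinely different sites. Incidentally, your argument shows in passing that any sieve containing a $J'$-covering sieve is itself $J'$-covering, so subordination forces $J \subseteq J'$ at the level of sieves; and you read the refinement condition the right way round ($S' \subseteq S$ with $S'$ a $J'$-cover), which agrees with the paper's definition of refinement and is the direction needed for the result --- the paper's later gloss that ``all arrows of $S$ are also in $S'$'' has the inclusion reversed and, taken literally, would render the proposition false.
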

In other words: if $J$ is subordinate to $J'$, then we know that any sheaf for $J'$ will automatically be a sheaf for $J$. 
\begin{proof} 
We could show the result by deploying the notion of morphisms of sites (see Mac Lane and Moerdijk~\cite{maclane2012sheaves}) $f: (\cat{C}, J) \rightarrow (\cat{D}, K)$, as a certain ``cover-preserving" morphism, where this is got by setting $\cat{D} = \cat{C}$ and $K = J'$, and using the identity functor $\cat{C} \rightarrow \cat{C}$, taking any object to itself and any $J$-covering to itself as well. Precomposition with $f$ gives a functor between the category of presheaves (going the other way), and we then use the pushforward (or direct image) functor $f_*: \cat{Sh}(\cat{C}, J') \rightarrow \cat{Sh}(\cat{C}, J)$, where this is the restriction of the precomposition with $f$ down to sheaves, to push a sheaf with respect to $J'$ to a sheaf $f_* F$ on $J$. Details can be found in the literature.
\end{proof}
As a particular case, two equivalent topologies have the same sheaves. \par 
The main take-away here is, of course, that if we can show that a particular presheaf is in fact a sheaf with respect to a topology $J'$, and if $J$ is another topology which we can establish is subordinate to $J'$, then we will get for free a sheaf with respect to $J$ (and with respect to any other subordinate topology, for that matter). Moreover, in particular, if there is a \textit{finest} cover, verifying the sheaf axiom there will guarantee it for all covers. \par 
Let's now connect these established results to the Grothendieck topologies that we defined in Theorem~\ref{thm:diagrams-as-topologies} and Corollary~\ref{corollary:decomps-as-topologies}, namely $\submon$ and $\decomp$, where the latter is a subfunctor of the former. We can focus on showing how $\submon$ relates to another Grothendieck topology of interest.
\begin{proposition}\label{prop:subobject-topology}
Fix any adhesively cocomplete category \(\cat{C}\). The topology given by $\submon$ on \(\cat{C}_{mono}\) is subordinate to the subobject topology.
\end{proposition}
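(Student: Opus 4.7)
The plan is to prove the stronger assertion that every cover in $\submon|_{\cat{C}_{mono}}$ is itself a cover in the subobject topology; since any covering refines itself trivially, this immediately yields the desired subordination.

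Recall that covers in $\submon(c)$ come in two flavors: singleton isomorphism covers $\{f \colon a \xrightarrow{\cong} c\}$ and colimit cocones $\Lambda_d = \{dJ \to c\}_{J \in \int G}$ of submonic diagrams $d$ with $\colim d = c$. The singleton case is immediate since an isomorphism onto $c$ is a maximal subobject and already furnishes a cover in the subobject topology. For the cocone case, I would verify two properties of $\Lambda_d$:
\begin{enumerate*}
    \item each leg $dJ \to c$ is a monomorphism, so it determines a genuine subobject of $c$; and
    \item the collection of legs is jointly covering in the sense prescribed by the subobject topology---i.e., the join of the associated subobjects is $c$.
\end{enumerate*}

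For the first point I would invoke the standard property of adhesive categories that colimits of submonic diagrams have monic legs. This is a consequence of the van Kampen condition from Definition~\ref{def:adhesive}: pushouts along monomorphisms preserve monicness, and a finite submonic colimit may be built inductively from such pushouts together with coproducts, which are disjoint and admit monic injections in adhesive settings. For the second point, joint epimorphicity of colimit cocones is definitional; translated into the subobject lattice of $c$, colimit universality combined with adhesivity ensures that the join of the subobjects $dJ \rightarrowtail c$ is precisely $c$, matching the subobject topology's covering condition.

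The principal obstacle I foresee is aligning the two standard formulations of the subobject topology---jointly epimorphic families of monomorphisms versus families of subobjects whose join is the whole object---and confirming that adhesivity forces these formulations to coincide on the cocones arising from $\submon$. Once this alignment is established, the result reduces to invoking the structural properties of adhesive colimits already recorded in the paper, particularly the observations underlying Theorem~\ref{thm:diagrams-as-topologies} and Corollary~\ref{corollary:decomps-as-topologies}.
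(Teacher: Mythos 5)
Your proposal is correct and follows essentially the same route as the paper: the paper's (very terse) proof likewise observes that $\submon$-covers on $\cat{C}_{mono}$ are, by construction, families of monomorphisms whose union is $c$ (or singleton isomorphisms), hence are themselves coverings in the subobject topology, so the subordination follows via self-refinement. You simply spell out the details (monic cocone legs via adhesivity, join of the legs being $c$) that the paper dismisses as immediate from the definition of $\submon$ in Corollary~\ref{corollary:decomps-as-topologies}.
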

\begin{proof}
The subobject topology just takes for coverings of an object $c$ (equivalence classes of) monomorphisms whose union is $c$. Need to show that any covering in the monic diagram topology can be refined by a covering in the subobject topology. This follows immediately from the definition of $\submon$ in Corollary~\ref{corollary:decomps-as-topologies}.
\end{proof}
This, combined with Proposition~\ref{prop:sheaves-on-subordinate-topologies}, yields the following result which can be particularly convenient when proving that a given presheaf is indeed a sheaf with respect to the decomposition topology.
\begin{proposition}\label{prop:subordinate-topology} 
\normalfont 
Fix any adhesively cocomplete category \(\cat{C}\) and any presheaf \(\Fa: \cat{C}^{op} \to \set\). If \(\Fa\) is a sheaf on \(\cat(C)_{mono}\) with respect to the subobject topology, then it will automatically give us a sheaf on the site \(\bigl( \cat{C}_{mono}, \: \decomp|_{\cat{C}_{mono}} \bigr)\).
\end{proposition}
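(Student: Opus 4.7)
The plan is to chain together the tools already provided in the excerpt, using $\submon|_{\cat{C}_{mono}}$ as an intermediate Grothendieck topology sitting between $\decomp|_{\cat{C}_{mono}}$ and the subobject topology, and then to invoke Proposition~\ref{prop:sheaves-on-subordinate-topologies} to transport the sheaf condition from the subobject topology down to $\decomp|_{\cat{C}_{mono}}$.

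More explicitly, the first step is to observe that $\decomp$ is by construction a subfunctor of $\submon$ (this is spelled out just before Corollary~\ref{corollary:decomps-as-topologies}, where $\decomp c$ is carved out of $\submon c$ by restricting to those colimit cocones whose underlying diagram is a structured decomposition). Consequently every $\decomp|_{\cat{C}_{mono}}$-covering of an object $c$ is already a $\submon|_{\cat{C}_{mono}}$-covering of $c$, and in particular is refined by itself as a $\submon|_{\cat{C}_{mono}}$-covering. This yields the subordination $\decomp|_{\cat{C}_{mono}} \leqslant \submon|_{\cat{C}_{mono}}$. The second step is to combine this with Proposition~\ref{prop:subobject-topology}, which gives $\submon|_{\cat{C}_{mono}} \leqslant $ subobject topology; since subordination of Grothendieck topologies is transitive (refinement of a refinement is a refinement, as remarked in the excerpt), we obtain $\decomp|_{\cat{C}_{mono}} \leqslant $ subobject topology.

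The final step is a direct application of Proposition~\ref{prop:sheaves-on-subordinate-topologies} to this subordination: any presheaf $\Fa \colon \cat{C}^{op} \to \set$ that satisfies the sheaf condition with respect to the subobject topology on $\cat{C}_{mono}$ automatically satisfies the sheaf condition with respect to $\decomp|_{\cat{C}_{mono}}$, which is exactly the desired conclusion.

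There is no genuine obstacle here; the result is essentially a two-step transitivity argument, and the only point that requires any care is verifying that the inclusion $\decomp \hookrightarrow \submon$ of subfunctors indeed gives the subordination in the correct direction (that is, that the topology generated by the \emph{smaller} cover-assigning functor $\decomp$ is subordinate to the topology generated by the \emph{larger} one $\submon$). This is immediate from the definition of subordination once one recalls that a covering is trivially a refinement of itself, so no further combinatorial work on structured decompositions is required beyond what Corollary~\ref{corollary:decomps-as-topologies} already supplies.
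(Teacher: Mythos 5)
Your proposal is correct and follows essentially the same route as the paper, which obtains the result by combining Proposition~\ref{prop:subobject-topology} with Proposition~\ref{prop:sheaves-on-subordinate-topologies}; you merely make explicit the intermediate step that the subfunctor inclusion $\decomp \subseteq \submon$ gives the subordination $\decomp|_{\cat{C}_{mono}} \leqslant \submon|_{\cat{C}_{mono}}$ (each covering refining itself), which the paper leaves implicit. Your check of the direction of subordination and the use of transitivity are both sound, so no gap remains.
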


\section{Deciding Sheaves}\label{sec:deciding-sheaves}
Adopting an algorithmically-minded perspective, the whole point of Thm~\ref{thm:diagrams-as-topologies} and Corollary~\ref{corollary:decomps-as-topologies} is to speak about computational problems which can be solved via compositional algorithms. To see this, suppose we are given a sheaf \(\mathcal{F} \colon \cat{C}^{op} \to \finset \) with respect to the site \( (\cat{C}, \decomp) \). We think of this sheaf as representing a \textit{computational problem}: it specifies which solutions spaces are associated to which inputs. We've already seen a concrete example of such a construction in Section~\ref{sec:introduction}; namely the coloring functor \[\cat{SimpFinGr}(-, K_n) \colon \cat{SimpFinGr}^{op} \to \finset\] which takes each graph to the set of all \(n\)-colorings of that graph. 


In this paper we focus on decision problems and specifically on the sheaf decision problem (defined in Section~\ref{sec:introduction}) which we recall below for ease of reference.

In this section we will show that the sheaf decision problem lies in \(\fpt\) under the dual parameterization of the width of the given structured decomposition and the feedback vertex number\footnote{A \define{feedback vertex set} in a graph \(G\) is a vertex subset \(S \subseteq V(G)\) of \(G\) whose removal from \(G\) yields an acyclic graph. The \define{feedback vertex number} of a graph \(G\) is the minimum size of a feedback vertex set in \(G\).} of the shape of the decomposition. Notice that, when the decomposition is tree-shaped, we recover parameterizations by our abstract analogue of tree-width (which, note, can be instantiated in \textit{any} adhesive category, not just that of graphs). 

To that end, notice that, if \(\cat{C}\) has colimits, then, since sheaves on this site send colimits of decompositions to limits of sets~\cite{rosiak-book}, the following diagram will always commute (see Bumpus, Kocsis \& Master~\cite{structured-decompositions} for a reference). 
\begin{equation}\label{diagram:compositional-square-when-C-has-colimits}
\begin{tikzcd}
	{\cat{C}} && {\finset^{op}} && {\mathbf{2}^{op}} \\
	{\sd\cat{C}} && {\sd\finset^{op}}
	\arrow["{\mathsf{colim}}", color={rgb,255:red,92;green,92;blue,214}, from=2-1, to=1-1]
	\arrow["{\mathsf{colim}}"', color={rgb,255:red,214;green,92;blue,92}, from=2-3, to=1-3]
	\arrow["{\mathcal{F}}"{description}, color={rgb,255:red,92;green,92;blue,214}, from=1-1, to=1-3]
	\arrow["{\sd\mathcal{F}}"{description}, color={rgb,255:red,214;green,92;blue,92}, from=2-1, to=2-3]
	\arrow["{comm.}"{description}, draw=none, from=2-1, to=1-3]
	\arrow["{\mathrm{dec}^{op}}"{description}, from=1-3, to=1-5]
\end{tikzcd}
\end{equation}
For concreteness, let us unpack what this means. The blue path corresponds to first gluing the constituent parts of the decomposition together (i.e. forgetting the compositional structure) and then solving the problem on the entire output. On the other hand the red path corresponds to a compositional solution: one first evaluates \(\Fa\) on the constituent parts of the decomposition and then joins\footnote{Note that the colimit functor \(\colim \colon \sd \finset ^{op} \to \finset^{op}\) -- which is in red in Diagram~\ref{diagram:compositional-square-when-C-has-colimits} -- is a \textit{limit of sets}; we invite the reader to keep this in mind throughout.} these solutions together to find a solution on the whole. Thus, since this diagram commutes for any sheaf \(\Fa\), we have just observed that there is a compositional algorithm for \textsc{SheafDecision}. 

Unfortunately the approach we just described is still very inefficient since for any input \(c\) and no matter which path we take in the diagram, we always end-up computing all of \(\Fa(c)\) which is very large in general (for example, the images of the coloring sheaf which we described above have cardinalities that grow exponentially in the size of the input graphs). One might hope to overcome this difficulty by lifting \(\mathsf{dec}\) to a functor from \(\finset^{op}\)-valued structured decompositions to \(\mathbf{2}^{op}\)-valued structured decompositions as is shown in the following diagram. 
\[\begin{tikzcd}
	{\cat{C}} && {\finset^{op}} && {\mathbf{2}^{op}} \\
	{\sd\cat{C}} && {\sd\finset^{op}} && {\sd\mathbf{2}^{op}}
	\arrow["{\mathsf{colim}}", color={rgb,255:red,92;green,92;blue,214}, from=2-1, to=1-1]
	\arrow["{\mathsf{colim}}"', color={rgb,255:red,214;green,92;blue,92}, from=2-3, to=1-3]
	\arrow["{\mathcal{F}}"{description}, color={rgb,255:red,92;green,92;blue,214}, from=1-1, to=1-3]
	\arrow["{\sd\mathcal{F}}"{description}, color={rgb,255:red,214;green,92;blue,92}, from=2-1, to=2-3]
	\arrow["{comm.}"{description}, draw=none, from=2-1, to=1-3]
	\arrow["{\mathrm{dec}^{op}}"{description}, from=1-3, to=1-5]
	\arrow["{\sd\mathrm{dec}^{op}}"{description}, from=2-3, to=2-5]
	\arrow["{\mathsf{colim} = \land}"', from=2-5, to=1-5]
\end{tikzcd}\]
\begin{figure}
    \centering
    \includegraphics[width=0.7\textwidth]{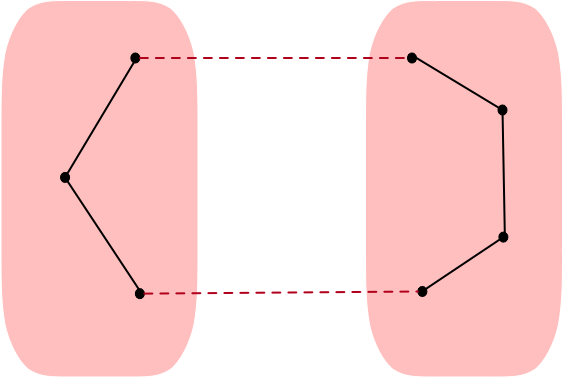}
    \caption{A structured decomposition \(P_3 \leftarrow \overline{K_2} \to P_4\) which decomposes a \(5\)-cycle into two paths on \(2\) and \(3\) edges respectively}
    \label{fig:5-cycle-decomp-example}
\end{figure}
However, this too is to no avail: the right-hand square of the above diagram does \textit{not} commute in general. To see why, consider the \(2\)-coloring sheaf \(\cat{SimpFinGr}(-, K_2)\) and let \(d\) be the structured decomposition \(P_3 \leftarrow \overline{K_2} \to P_4\) which decomposes a \(5\)-cycle into two paths on \(2\) and \(3\) edges respectively (see Figure~\ref{fig:5-cycle-decomp-example}). The image of \(d\) under \(\sd\mathsf{dec}^{op} \circ \sd \Fa\) is \(\top \leftarrow \top \to \top\) (since the graphs \(P_3\), \(\overline{K_2}\) and \(P_4\) are all \(2\)-colorable) and the colimit of this diagram in \(\mathbf{2}^{op}\) (i.e. a limit in \(\mathbf{2}\); i.e. a conjunction) is \(\top\). However, \(5\)-cycles are not \(2\)-colorable.

Although this counterexample might seem discouraging, we will see that the approach we just sketched is very close to the correct idea. Indeed in the rest of this section we will show (Theorem~\ref{thm:A-is-functor}) that there is an endofunctor \[\mathcal{A} \colon \sd \finset^{op} \to \sd \finset^{op}\] 
which makes the following diagram commute. 
\begin{equation}\label{diagram:algorithmic-goal}
\begin{tikzcd}
	{\cat{C}} && {\finset^{op}} && {\mathbf{2}^{op}} \\
	{\sd\cat{C}} && {\sd\finset^{op}} & {\sd\finset^{op}} & {\sd\mathbf{2}^{op}}
	\arrow["{\mathsf{colim}}", color={rgb,255:red,92;green,92;blue,214}, from=2-1, to=1-1]
	\arrow[""{name=0, anchor=center, inner sep=0}, "{\mathsf{colim}}"', color={rgb,255:red,214;green,92;blue,92}, from=2-3, to=1-3]
	\arrow["{\mathcal{F}}"{description}, color={rgb,255:red,92;green,92;blue,214}, from=1-1, to=1-3]
	\arrow["{\sd\mathcal{F}}"{description}, color={rgb,255:red,214;green,92;blue,92}, from=2-1, to=2-3]
	\arrow["{comm.}"{description}, draw=none, from=2-1, to=1-3]
	\arrow["{\mathrm{dec}^{op}}"{description}, from=1-3, to=1-5]
	\arrow["{\mathcal{A}}", from=2-3, to=2-4]
	\arrow["{\sd\mathrm{dec}^{op}}", from=2-4, to=2-5]
	\arrow[""{name=1, anchor=center, inner sep=0}, "\land"', from=2-5, to=1-5]
	\arrow["{comm.\text{ (by Thm.~\ref{thm:A-is-functor})}}"{description, pos=0.6}, draw=none, from=0, to=1]
\end{tikzcd}
\end{equation}
Moreover we will show that this functor \(\mathcal{A}\) can be computed efficiently (this is Theorem~\ref{thm:filtering-algorithm}, which we mentioned in the introduction).

\paragraph{When \(\cat{C}\) fails to have colimits.} Before we move on to defining the object and morphism maps of \(\Aa\) (which -- as we shall prove in Theorem~\ref{thm:A-is-functor} -- assemble into a functor), we will briefly note how the entire discussion mentioned above can be applied even to sheaves defined on the site \(\bigl( \cat{C}_{mono}, \decomp|_{\cat{C}_{mono}}\bigr)\). Recall from Sections~\ref{sec:problems-as-functors} and~\ref{sec:compositional-data} that there are many situations in which we would like to speak of computational problems defined on the category \(\cat{C}_{mono}\) having the same objects of \(\cat{C}\), but only those morphisms in \(\cat{C}\) which are monos. In this case, since \(\cat{C}_{mono}\) will fail to have colimits in general, we trivially cannot deduce the commutativity of the `square of compositional algorithms' (Diagram~\ref{diagram:compositional-square-when-C-has-colimits}) since one no longer has the colimit functor (namely \(\colim\) which is marked in blue in Diagram~\ref{diagram:compositional-square-when-C-has-colimits}). However, note that in either case -- i.e. taking any \(\cat{K} \in \{\cat{C}, \cat{C}_{mono}\} \) -- if \(\Fa\) is a sheaf with respect to the site \((\cat{K}, \decomp)\), then we always have that, for any covering \[(\kappa \in \cat{K}, \: d \in \decomp \kappa)\] we always have that 
\begin{equation}\label{equation:sheafy-commutativity}
    (\mathsf{dec}^{op} \circ \Fa)(\kappa) = (\mathsf{dec}^{op} \circ \const \circ \sd\Fa)(d).
\end{equation}
To state this diagrammatically, we can invoke the Grothendieck construction\footnote{Recall that the Grothendieck construction produces from any functor \(F \colon \cat{A} \to \cat{B}\) a \textit{category} denoted \(\int F\). For details on the construction and its properties, we refer the reader to Riehl's textbook~\cite{riehl2017category}}. For convenience we spell-out the definition of the category \(\int \decomp\); it is defined as having
\begin{itemize}
    \item \textbf{objects} given by pairs \((c , d)\) with \(c \in \cat{C}\) and \(d \in \decomp(c) \)
    \item \textbf{arrows} from \((c , d)\) to \((c' , d')\) are given by morphisms in \(\cat{C}\) of the form \( f_0 \colon c \to c' \) such that \(\decomp(f)(d') = d\).
\end{itemize}
Notice that this allows us to restate Equation~\eqref{equation:sheafy-commutativity} as the commutativity of the following diagram. 
\[\begin{tikzcd}
	& {\cat{C}} && {\finset^{op}} && {\mathbf{2}^{op}} \\
	\int\decomp \\
	& {\sd\cat{C}} && {\sd\finset^{op}}
	\arrow["{\mathsf{colim}}"', color={rgb,255:red,214;green,92;blue,92}, from=3-4, to=1-4]
	\arrow["{\mathcal{F}}"{description}, color={rgb,255:red,92;green,92;blue,214}, from=1-2, to=1-4]
	\arrow["{\sd\mathcal{F}}"{description}, color={rgb,255:red,214;green,92;blue,92}, from=3-2, to=3-4]
	\arrow["{comm.}"{description}, draw=none, from=3-2, to=1-4]
	\arrow["{\mathrm{dec}^{op}}"{description}, from=1-4, to=1-6]
	\arrow["{\mathsf{fst}}", from=2-1, to=1-2]
	\arrow["{\mathsf{snd}}"', from=2-1, to=3-2]
\end{tikzcd}\]
where \(\mathsf{fst}\) and \(\mathsf{snd}\) are the evident projection functors, whose object maps are respectively \(\mathsf{fst} \colon (c,d) \mapsto c\) and \(\mathsf{snd} \colon (c,d) \mapsto d\). The point of all of this machinery is that it allows us to restate our algorithmic goal (namely  Diagram~\eqref{diagram:algorithmic-goal} -- whose commutativity we shall prove in Theorem~\ref{thm:A-is-functor}) in the following manner when our category \(\cat{C}\) of inputs does not have colimits. (Notice that the following diagram is the formal version of Diagram~\ref{diagram:ENGLISH} which we used to sketch our `roadmap' in Section~\ref{sec:introduction}.)

\begin{equation}\label{diagram:algorithmic-goal-NO-COLIMITS}
\begin{tikzcd}
	& {\cat{C}} && {\finset^{op}} && {\mathbf{2}^{op}} \\
	{\int \decomp} \\
	& {\sd\cat{C}} && {\sd\finset^{op}} & {\sd\finset^{op}} & {\sd\mathbf{2}^{op}}
	\arrow[""{name=0, anchor=center, inner sep=0}, "{\mathsf{colim}}"', color={rgb,255:red,214;green,92;blue,92}, from=3-4, to=1-4]
	\arrow["{\mathcal{F}}", color={rgb,255:red,92;green,92;blue,214}, from=1-2, to=1-4]
	\arrow["{\sd\mathcal{F}}"', color={rgb,255:red,214;green,92;blue,92}, from=3-2, to=3-4]
	\arrow["{comm.}", shift right=5, draw=none, from=3-2, to=1-4]
	\arrow["{\mathrm{dec}^{op}}", from=1-4, to=1-6]
	\arrow["{\mathcal{A}}"', from=3-4, to=3-5]
	\arrow["{\sd\mathrm{dec}^{op}}"', from=3-5, to=3-6]
	\arrow[""{name=1, anchor=center, inner sep=0}, "\land"', from=3-6, to=1-6]
	\arrow["{\mathsf{fst}}", from=2-1, to=1-2]
	\arrow["{\mathsf{snd}}"', from=2-1, to=3-2]
	\arrow["{comm. \text{ (by Thm.~\ref{thm:A-is-functor})}}"{pos=0.7}, shift right=3, draw=none, from=0, to=1]
\end{tikzcd}
\end{equation}

All that remains to be done is to establish that the functor \(\Aa\) in Diagram~\ref{diagram:algorithmic-goal-NO-COLIMITS}. As we shall see, we will define this functor by making use of the monad \(\Ta\) defined as follows. Recall that, for any category \(\cat{K}\) with pullbacks and colimits, there is the following adjunction~\cite{structured-decompositions} (which is a special case of the analogous adjunction in categories of diagrams)
\[ \begin{tikzcd} \sd \cat{K} \ar[r,bend left,"\cat{colim}",pos=.4] \ar[r,phantom,"\bot"]& \ar[l,bend left,"\cat{const}",pos=.6] \cat{K}\end{tikzcd}\]
where the functor \(\const\) takes objects of \(\cat{K}\) to trivial decompositions (having a single bag) and the functor \(\colim\) takes decompositions to their colimits. Now, taking \(\cat{K} = \finset^{op}\), we shall denote by \(\Ta\) the monad 
\begin{align}
    \Ta &\colon \sd\finset^{op} \to \sd\finset^{op} \label{eqn:monad}\\
    \Ta &\colon  d \mapsto (\const \circ \colim)(d) \nonumber
\end{align}
given by this adjunction. In the proof of the following theorem we will make use of this monad to finally establish the existence of the desired functor \(\Aa\). 

\begin{theorem}\label{thm:A-is-functor}
There is a functor \(\mathcal{A} \colon \sd \finset^{op} \to \sd \finset^{op}\) such that:
\begin{enumerate}[label=\textbf{(A\arabic*)}]
    \item there is a natural isomorphism \label{thm:A-is-functor-point-1}
\[\begin{tikzcd}
	{\finset^{op}} \\
	\\
	{\sd\finset^{op}} && {\sd\finset^{op}}
	\arrow["{\mathcal{A}}"', from=3-1, to=3-3]
	\arrow[""{name=0, anchor=center, inner sep=0}, "{\colim}"', from=3-3, to=1-1]
	\arrow["{\colim}", from=3-1, to=1-1]
	\arrow["\cong", draw=none, from=3-1, to=0]
\end{tikzcd}\]
    \item 
    there are natural transformations \(\alpha^1\) and \(\alpha^2\) which factor the unit of  \(\eta\) of the monad \(\mathcal{T}\) -- see Functor~\eqref{eqn:monad} -- as in the following diagram \label{thm:A-is-functor-point-2}
    
\[\begin{tikzcd}
	{\id_{\sd\finset^{op}}} && {\mathcal{T}} \\
	& {\mathcal{A}}
	\arrow["\eta", Rightarrow, from=1-1, to=1-3]
	\arrow["{\alpha^1}"', Rightarrow, from=1-1, to=2-2]
	\arrow["{\alpha^2}"', Rightarrow, from=2-2, to=1-3]
\end{tikzcd}\]
    \item and the following diagram (which is Diagram~\eqref{diagram:algorithmic-goal-NO-COLIMITS}, restated) commutes \label{thm:A-is-functor-point-3} \[\begin{adjustbox}{scale=0.9}\end{adjustbox}\]
\end{enumerate}
\end{theorem}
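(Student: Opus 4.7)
The plan is to construct $\mathcal{A}$ via the image factorization of the canonical projections from the limit of each decomposition. Given any $d \colon \int G \to \finset^{op}$, recall that $\colim d$ computed in $\finset^{op}$ is the limit in $\finset$ of the underlying diagram, concretely the set of compatible tuples $(s_x)_{x} \in \prod_x d(x)$. For each object $x$ of $\int G$, I define $\mathcal{A}(d)(x)$ to be the image in $\finset$ of the canonical projection $\colim d \twoheadrightarrow d(x)$; in other words, $\mathcal{A}(d)(x)$ is the subset of $d(x)$ consisting of those elements that extend to some compatible tuple. The span maps of $\mathcal{A}(d)$ are obtained by restricting those of $d$, which is well-defined precisely because such restrictions carry extendable elements to extendable elements. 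On a morphism $(F, \eta) \colon d_1 \to d_2$ of $\sd\finset^{op}$, I set $\mathcal{A}(F, \eta) := (F, \eta')$ where $\eta'$ is the componentwise restriction of $\eta$. That $\eta$ does restrict follows from a short naturality chase showing that $\eta$ sends compatible tuples of $d_2$ to compatible tuples of $d_1$, and functoriality of $\mathcal{A}$ is then immediate from the functoriality of restriction.

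For \ref{thm:A-is-functor-point-1}, a tuple in $\prod_x d(x)$ is compatible with all diagram morphisms if and only if each of its components already lies in $\mathcal{A}(d)(x)$; hence $\colim d$ and $\colim \mathcal{A}(d)$ coincide as subsets of the product, and the identity function furnishes the natural isomorphism $\colim \cong \colim \circ \mathcal{A}$.

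For \ref{thm:A-is-functor-point-2}, I define $\alpha^1(d)$ componentwise via the set-theoretic inclusions $\mathcal{A}(d)(x) \hookrightarrow d(x)$ (interpreted as arrows $d(x) \to \mathcal{A}(d)(x)$ in $\finset^{op}$), and $\alpha^2(d)$ componentwise via the surjective factors $\colim d \twoheadrightarrow \mathcal{A}(d)(x)$ of the image factorization. Naturality of both reduces to the naturality of image factorization in $\finset$, and the required factorization $\eta = \alpha^2 \circ \alpha^1$ holds componentwise because image factorization composes back to the original map. Finally, \ref{thm:A-is-functor-point-3} reduces, via the sheaf condition $\Fa(c) \cong \colim \sd\Fa(d)$ for each $(c,d) \in \int \decomp$, to the elementary observation that a set is non-empty if and only if each set in a family of its images is non-empty. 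The main technical obstacle will be the careful verification of the naturality squares for $\alpha^1$ and $\alpha^2$ across morphisms of structured decompositions whose shape functors $F$ are not identities; this amounts to checking that each restricted $\eta$-component lands in the relevant image, which is ensured by the same compatibility-preservation argument used to define $\mathcal{A}$ on morphisms.
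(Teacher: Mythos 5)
Your construction is essentially the paper's: the paper derives \(\mathcal{A}\) from the unit of the monad \(\mathcal{T}=\const\circ\colim\), and unpacking that unit in \(\finset\) gives exactly your recipe --- each bag of \(\mathcal{A}d\) is the image of the limit-cone leg \(\lambda^d_x\colon \lim d \to dx\), the structure maps are restrictions, the morphism map is componentwise restriction justified by a naturality chase, and \(\alpha^1,\alpha^2\) are precisely the inclusion and corestriction factoring each leg, so points \ref{thm:A-is-functor-point-1}--\ref{thm:A-is-functor-point-3} are handled as in the paper. One small repair to your argument for \ref{thm:A-is-functor-point-1}: the stated biconditional ``a tuple in \(\prod_x d(x)\) is compatible if and only if each component lies in \(\mathcal{A}(d)(x)\)'' fails in the `if' direction (components can each be individually extendable without the tuple being jointly compatible); the conclusion \(\lim \mathcal{A}(d)=\lim d\) is nevertheless correct, and should instead be argued by noting that the structure maps of \(\mathcal{A}(d)\) are restrictions of those of \(d\), so any tuple compatible for \(\mathcal{A}(d)\) is compatible for \(d\), while any tuple compatible for \(d\) automatically has all components in the images.
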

One should think of the functor \(\mathcal{A}\) of Theorem~\ref{thm:A-is-functor} as a pre-processing routine by which one filters-out those local solutions which cannot be extended to global solutions. Furthermore, notice that there is a sense in which \(\Aa\) is \textit{canonical} since it arises from the monad \(\Ta\). Indeed as we shall see, although most of the proof is focused on the derivation of \(\Aa\), none of the steps in this derivation are ad-hoc: they are the `obvious steps' which are completely determined by the properties of the monad \(\Ta\). We find this to be a great benefit of our categorical approach. 
\subsection{Proof of Theorem~\ref{thm:A-is-functor}} 
The proof is structured as follows: we shall begin by making use of the monad \(\Ta\) to define the functor \(\Aa \colon \sd \finset^{op} \to \sd \finset^{op}\); then we will point out how various facts accumulated through the derivation of \(\Aa\) easily imply the points~\ref{thm:A-is-functor-point-1},~\ref{thm:A-is-functor-point-2} and~\ref{thm:A-is-functor-point-3} listed in the statement of the theorem. 
\subsubsection{Gathering Intuition}
Notice that, if we momentarily disregard any consideration of efficiency of computation, then it is easy to find a candidate for the desired functor \(\mathcal{A}\). To see this, consider what happens when we pass a \(\finset^{op}\)-valued structured decomposition \(d\) through the monad \(\Ta\). We will have that \(\Ta d\) consists of a structured decomposition of shape \(K_1\) (the one-vertex complete graph) whose bag consists of the solution space we seek to evaluate (namely the colimit of \(d\) in \(\finset^{op}\)). Thus we will clearly have that Diagram~\eqref{diagram:algorithmic-goal} commutes if we replace \(\mathcal{A}\) by \(\mathcal{T}\).

Although \(\mathcal{T}\) is not efficiently computable (since, as we mentioned earlier, the set \(\colim d\) might be very large) we will see that the unit \(\eta\) of the monad \(\mathcal{T}\) given by the adjunction \(\const \vdash \mathsf{colim}\) is the crucial ingredient needed to specify (up to isomorphism) the desired functor \(\mathcal{A}\) (shown to be efficiently computable later in this section.) Thus, towards defining \(\Aa\), we will first spell-out the definition of \(\eta\). 

Notice that \(\eta\) yields a morphism of structured decompositions \[\eta_d : d \to \mathcal{T}d\] which by definition consists of a pair \((\eta_d^0, \eta_d^1)\) where \(\eta_d^0 \colon \int G \to \int K_1\) is a functor and \(\eta_d^1 \colon d \Rightarrow  \mathcal{T}(d)\eta_d^0\) is a natural transformation as in the following diagram.
\begin{equation}\label{diagram:unit-of-monad}
    \begin{tikzcd}
	{\int G} && {\int K_1} \\
	\\
	& {\finset^{op}}
	\arrow[""{name=0, anchor=center, inner sep=0}, "{\mathcal{T}d}", from=1-3, to=3-2]
	\arrow[""{name=1, anchor=center, inner sep=0}, "d"', from=1-1, to=3-2]
	\arrow["{\eta_d^0}", from=1-1, to=1-3]
	\arrow["{\eta_d^1}", shorten <=9pt, shorten >=9pt, Rightarrow, from=1, to=0]
\end{tikzcd}
\end{equation}
Now notice that the diagram above views \(d\) and \(\Ta d\) as \(\finset^{op}\)-valued structured decompositions. Instead, in the interest of convenience and legibility, we will "slide the op" in Diagram~\eqref{diagram:unit-of-monad} so as to think of \(\eta_d\) as a morphism of \(\finset\)-valued structured \define{co-decompositions} (i.e. contravariant structured decompositions); in other words we will rewrite Diagram~\eqref{diagram:unit-of-monad} as follows. 
\begin{equation}\label{diagram:unit-of-monad}
    \begin{tikzcd}
	{(\int G)^{op}} && {(\int K_1)^{op}} \\
	\\
	& {\finset}
	\arrow[""{name=0, anchor=center, inner sep=0}, "{\mathcal{T}d}", from=1-3, to=3-2]
	\arrow[""{name=1, anchor=center, inner sep=0}, "d"', from=1-1, to=3-2]
	\arrow["{(\eta_d^0)^{op}}", from=1-1, to=1-3]
	\arrow["{\lambda^d}", shorten <=9pt, shorten >=9pt, Rightarrow, from=0, to=1]
\end{tikzcd}
\end{equation}
For clarity, we will take a moment to spell out what this change of perspective entails. First of all observe that \(\finset\)-valued co-decompositions associate to each edge of \(G\) a \textit{cospan} of sets whose legs are epimorphisms (since they are monomorphisms in \(\finset^{op}\)). The natural transformation \(\lambda^d\) is simply the transformation \(\eta_d^1\) when its components are viewed not as morphisms in \(\finset^{op}\), but as morphisms of \(\finset\) (we choose to denote it \(\lambda^d\) simply to avoid notational confusion). The components of \(\lambda^d\) are functions (projections) from the single bag of \(\Ta d\) to each one the bags of \(d\) as shown below.
\begin{equation}\label{eqn:def-limit-cone-leg}
    \lambda^{d}_{x} \colon \mathcal{T}d \to dx.
\end{equation}
In particular these are given by the \textit{limit cone}\footnote{Once again we remind the reader that taking the \textbf{colimit} in \(\finset^{op}\) of a \(\finset^{op}\)-valued decomposition \(d \colon \int G \to \finset^{op}\) will yield the same result as first viewing \(d\) as a \(\finset\)-valued \textbf{co}-decomposition \(d \colon (\int G)^{op} \to \finset\) and evaluating its \textbf{limit} in \(\finset\).} sitting above \(d\) with apex \(\Ta d\). 

\subsubsection{Defining the object map of \(\Aa\).} With these observations in mind, we can now define the object map
\begin{equation}\label{eqn:A-obj-cpt}
     \mathcal{A}_0 \colon \Bigl( \bigl(\int G)^{op} \xrightarrow{d} \finset \Bigr) \mapsto \Bigl(  \bigl(\int G\bigr)^{op} \xrightarrow{\mathcal{A}_0 d} \finset \Bigr)
\end{equation}
of the functor \(\mathcal{A}\) which we are seeking to establish. It maps \(d\) to the decomposition obtained by restricting all of the bags and adhesions of \(d\) to the images of the legs of the limit cone with apex \(\Ta d\) (we encourage the reader to consult Figure~\ref{fig:A-object-map} for a visual aid). 
Spelling this out formally, the map \(\mathcal{A}_0\) takes a structured co-decomposition \(d \colon (\int G)^{op} \to \finset\) to the structured co-decomposition \(\mathcal{A}_0 d \colon (\int G)^{op} \to \finset\) which has the same shape as \(d\) and which is defined as follows: 
\begin{align}
    \mathcal{A}_0 d &\colon (x \in \int G) \mapsto \image \lambda^{d}_{x} \label{eqn:A-ob-cpt-details}\\
    \mathcal{A}_0 d &\colon (e \xleftarrow{f_{x,e}} x \in \morphisms \bigl (\int G \bigr)^{op}) \mapsto \bigl( \image \lambda^{d}_{e} \xleftarrow{f_{x,e}|_{\image \lambda^{d}_{x}}} \image \lambda^{d}_{x} \bigr ). \nonumber 
\end{align}

\begin{figure}
    \begin{adjustbox}{scale=0.85}
    \begin{tikzcd}
	{G \cong K_2} & {d \colon (\int G)^{op} \to \finset} && {\mathcal{T}d} && {\mathcal{A}_0d \colon (\int G)^{op} \to \finset} \\
	x & dx &&&& {\image \lambda_x^d} \\
	& de && {\lim d} && {\image(f_{x,e} \circ \lambda_x^{d})} \\
	y & dy &&&& {\image \lambda_y^d}
	\arrow["e", no head, from=2-1, to=4-1]
	\arrow["{f_{x,e}}", color={rgb,255:red,214;green,92;blue,92}, from=2-2, to=3-2]
	\arrow["{f_{y,e}}"', color={rgb,255:red,214;green,92;blue,92}, from=4-2, to=3-2]
	\arrow["{f_{x,e}|_{\image \lambda_{dx}}}", from=2-6, to=3-6]
	\arrow["{f_{y,e}|_{\image \lambda_{dy}}}"', from=4-6, to=3-6]
	\arrow["{\lambda^d_{y}}", from=3-4, to=4-2]
	\arrow["{\lambda^d_{x}}"', from=3-4, to=2-2]
    \end{tikzcd}
    \end{adjustbox}
    \caption{A visualisation of the definition (Equations~\eqref{eqn:A-obj-cpt} and~\eqref{eqn:A-ob-cpt-details}) of the object map of \(\mathcal{A}\) on a structured decomposition \(d\) of shape \(\vec{K}_2\) (the directed edge). The unit of the monad given by \(\mathcal{T}\) yields a morphism \((\eta_d^0, \eta_d^1) : d \to \mathcal{T}d\). The components of \(\eta_d^1\) (namely \(\eta_d^1(x)\) and \(\eta_d^1(y)\)) are morphisms in \(\finset^{op}\); when viewed as morphisms in \(\finset\), we denote them as \(\lambda^d_x\) and \(\lambda^d_y\). These morphisms are given by the limit cone over \(d\) (viewed as a diagram in \(\finset\)).}
    \label{fig:A-object-map}
\end{figure}
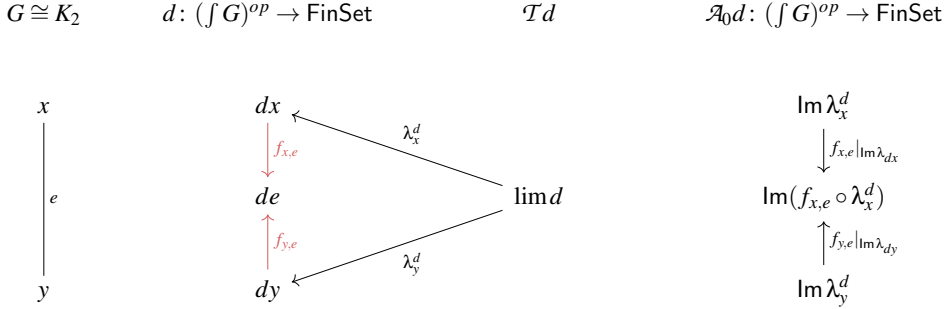

\subsubsection{Preliminaries of the definition of the morphism map of \(\Aa\).}
Now, towards defining the morphism map \(\mathcal{A}_1\), observe that, by the definition of \(\mathcal{A}_0\) via the unit \(\eta\) of \(\mathcal{T}\), we have that \(\eta\) factors as
\begin{equation}\label{eqn:def-nat-transfs-alpha-pt-1}
\begin{tikzcd}
	d &&& {\mathcal{A}_0d} &&& {\mathcal{T}d}
	\arrow["{\alpha^2_{\mathcal{A}d} = (\eta^0_d, \mathfrak{a}^2_d)}"', color={rgb,255:red,92;green,92;blue,214}, from=1-4, to=1-7]
	\arrow["{\eta_d = (\eta_d^0, \lambda^d)}", curve={height=-24pt}, from=1-1, to=1-7]
	\arrow["{\alpha^1_d = (\id_{\domain(d)}, \mathfrak{a}^1_d)}"', color={rgb,255:red,214;green,92;blue,92}, from=1-1, to=1-4]
\end{tikzcd}
\end{equation}
where, for any \(x \in \int G\), the maps \(\mathfrak{a}^1_d\) and \(\mathfrak{a}^2_{\mathcal{A}d}\) shown below 
\[\begin{tikzcd}
	{(\int G)^{op}} && {(\int G)^{op}} && {(\int K_1)^{op}} \\
	\\
	&& \finset
	\arrow[""{name=0, anchor=center, inner sep=0}, "d"'{pos=0.2}, from=1-1, to=3-3]
	\arrow["{\eta_d^0}", color={rgb,255:red,92;green,92;blue,214}, from=1-3, to=1-5]
	\arrow[""{name=1, anchor=center, inner sep=0}, "{\mathcal{A}d}"'{pos=0.3}, from=1-3, to=3-3]
	\arrow[""{name=2, anchor=center, inner sep=0}, "{\mathcal{T}d}"{pos=0.2}, from=1-5, to=3-3]
	\arrow["{\mathsf{id}}", color={rgb,255:red,214;green,92;blue,92}, from=1-1, to=1-3]
	\arrow["{\mathfrak{a}^2}", color={rgb,255:red,92;green,92;blue,214}, shorten <=7pt, shorten >=7pt, Rightarrow, from=2, to=1]
	\arrow["{\mathfrak{a}^1}", color={rgb,255:red,214;green,92;blue,92}, shorten <=7pt, shorten >=7pt, Rightarrow, from=1, to=0]
\end{tikzcd}\]
are respectively defined by restriction and corestriction\footnote{Dually to the notation used for restrictions, for any function \(f \colon A 
\to B\), we denote the \textbf{corestriction} of \(f\) to its image as \(f|^{\image f}\).} of \(\lambda^d\). Spelling this out, these maps are defined as the following morphisms in \(\finset\)
\begin{equation}\label{eqn:def-nat-transfs-alpha-pt-2}
\begin{tikzcd}
	{\mathfrak{a}_d^1(x) :=} & {\mathcal{A}dx} && dx \\
	{\mathfrak{a}_{\mathcal{A}d}^2 (x) :=} & {\mathcal{T}dK_1} && {\mathcal{A}dx}
	\arrow["{\id_{dx}|_{\image \lambda^{d}_{x}}}", hook, from=1-2, to=1-4]
	\arrow["{\lambda_x^d|^{{\image \lambda^{d}_{x}}}}", from=2-2, to=2-4]
\end{tikzcd}
\end{equation}
\subsubsection{Defining the morphism map of \(\Aa\).} Given any morphism \(q = (q^0, q^1)\) of \(\finset\)-valued structured \textit{co}-decompositions as in the following diagram
\begin{equation}\label{diagram:Q-arrow}
\begin{tikzcd}
	{(\int G)^{op}} && {(\int H)^{op}} \\
	\\
	& \finset
	\arrow["{q^0}", color={rgb,255:red,92;green,92;blue,214}, from=1-1, to=1-3]
	\arrow[""{name=0, anchor=center, inner sep=0}, "d"', from=1-1, to=3-2]
	\arrow[""{name=1, anchor=center, inner sep=0}, "{d'}", from=1-3, to=3-2]
	\arrow["{q^1}", color={rgb,255:red,92;green,92;blue,214}, shorten <=8pt, shorten >=8pt, Rightarrow, from=1, to=0]
\end{tikzcd}
\end{equation}
we have, by Diagram~\eqref{eqn:def-nat-transfs-alpha-pt-1} and the functoriality of the monad \(\Ta\) and the naturality of its unit, that the following diagram commutes.
\begin{equation}\label{diagram:pieces-needed-for-spelling-out-the-definition-of-morphism-cpt-of-A}
\begin{tikzcd}
	d &&& {\mathcal{A}_0d} &&& {\mathcal{T}d} \\
	\\
	{d'} &&& {\mathcal{A}_0d'} &&& {\mathcal{T}d'}
	\arrow["{\alpha^2_{\mathcal{A}d}}"', color={rgb,255:red,214;green,92;blue,92}, from=1-4, to=1-7]
	\arrow["{\eta_d = (\eta_d^0, \lambda^d)}", curve={height=-24pt}, from=1-1, to=1-7]
	\arrow["{\alpha^1_d}"', from=1-1, to=1-4]
	\arrow["q", from=1-1, to=3-1]
	\arrow["{\mathcal{T}q}", color={rgb,255:red,92;green,92;blue,214}, from=1-7, to=3-7]
	\arrow["{\alpha^1_{d'}}", from=3-1, to=3-4]
	\arrow["{\alpha^2_{\mathcal{A}d'}}", from=3-4, to=3-7]
	\arrow["{\eta_{d'}}"', curve={height=24pt}, from=3-1, to=3-7]
\end{tikzcd}
\end{equation}
Recall that each of the morphisms in the above diagram are themselves morphisms in a category of diagrams consisting of a functor and a natural transformation where the natural transformation points ``backwards'' (for example as in Diagram~\eqref{diagram:Q-arrow}). For instance recall that the morphism \(\eta_{d} \colon d \to \Ta d\) (drawn in Diagram~\eqref{diagram:unit-of-monad}) corresponds to the limit cone over \(d\).

We are seeking to prove the existence of a morphism \(\Aa_0 d \to \Aa_0 d'\) which commutes with Diagram~\eqref{diagram:pieces-needed-for-spelling-out-the-definition-of-morphism-cpt-of-A}. This will follow from the commutativity of Diagram~\eqref{diagram:pieces-needed-for-spelling-out-the-definition-of-morphism-cpt-of-A}. However, it deserves to be unpacked as follows.

Consider Diagram~\eqref{diagram:pieces-needed-for-spelling-out-the-definition-of-morphism-cpt-of-A}. The morphism \(\Ta q\) represents a function from the limit of \(d'\) to the limit of \(d\). The morphisms \(\eta_d\) and \(\eta_{d'}\) represent the limit cones sitting above \(d\) and \(d'\) respectively. The morphism \(q\) (defined in Diagram~\eqref{diagram:Q-arrow}) corresponds to assigning to each object \(x \in \domain d\) a morphism \[q^1_{x} \colon (q^0 x \in \domain d') \to (x \in \domain d).\] The commutativity of Diagram~\eqref{diagram:pieces-needed-for-spelling-out-the-definition-of-morphism-cpt-of-A} amounts to stating the following: for each \(x \in \domain d\) and each component \(q^1_{x}\) of \(q_1\), the range of \(q^1_{x}\) is completely contained in the images of the leg \(\lambda^d_x \colon \Ta d \to dx\) of the limit cone \(\lambda^d\) at \(x\). But then, this means that, by restricting \(q^1_{x}\) to the image of \(\lambda^{d'}x\), we obtain the desired morphism shown below (where recall \(\alpha^1_{d'} := (\mathsf{id}_{(\int G)^{op}}, \mathfrak{a}^1_{d'})\).
\[\begin{tikzcd}
	{(\int G)^{op}} && {(\int G)^{op}} \\
	\\
	& \finset
	\arrow[""{name=0, anchor=center, inner sep=0}, "{\mathcal{A}_0 d'}"', from=1-1, to=3-2]
	\arrow[""{name=1, anchor=center, inner sep=0}, "{\mathcal{A}_0 d}", from=1-3, to=3-2]
	\arrow["{\mathsf{id}}", from=1-1, to=1-3]
	\arrow["{\mathfrak{a}^1_{d'} \circ q^1}", shorten <=8pt, shorten >=8pt, Rightarrow, from=1, to=0]
\end{tikzcd}\]

In summary, this derivation allowed us to define the desired morphism \(\Aa q \colon \mathcal{A}_0d \to \mathcal{A}_0d'\) given by
\begin{equation}\label{eqn:A-morphism-map}
\mathcal{A}_1 \colon \bigl( (q_0, q_1) \colon d \to d' \bigr) \mapsto \bigl( (q_0, \mathfrak{a}^1_{d'} \circ q^1) \bigr).
\end{equation}
\subsubsection{Completing the proof}
The functoriality of \(\mathcal{A}\) is immediate: it clearly preserves identities and, since \(\Ta\) is a functor, one can see that \(\Aa\) preserves composition by inspection of Diagram~\eqref{diagram:pieces-needed-for-spelling-out-the-definition-of-morphism-cpt-of-A}. On the other hand Point~\ref{thm:A-is-functor-point-1} can be seen to follow by the definition of \(\mathcal{A}_0\) and the properties of limits in \(\finset\) (namely that, if \(X_1 \xleftarrow{\pi_1} X_1 \times_S X_2 \xrightarrow{\pi_2} X_2\) is the pullback of a span \(X_1 \xrightarrow{f_1} S \xleftarrow{f_2} X_2\), then  \(X_1 \times_S X_2\) will also be isomorphic to the pullback object of the span \(\image \pi_1 \xrightarrow{f_1|_{\image \pi_1}} S \xleftarrow{f_2|_{\image \pi_1}} \image \pi_2\)). Since the naturality of \(\alpha^1\) and \(\alpha^2\) is evident, Point~\ref{thm:A-is-functor-point-2} just amounts to the commutativity of Diagram~\eqref{eqn:def-nat-transfs-alpha-pt-1}. Finally, to show Point~\ref{thm:A-is-functor-point-3}, observe that, by the definition of the object map \(\mathcal{A}_0\) and since the only set with a morphism to the empty set is the empty set, we have that \(\land \circ \sd\mathsf{dec}^{op} \circ \mathcal{A} = \mathsf{dec}^{op} \circ \colim.\) 

\subsection{An Algorithm for Computing \(\Aa\)}
Recall that one should think of the functor \(\mathcal{A}\) of Theorem~\ref{thm:A-is-functor} as a pre-processing routine which filters-out those local solutions which cannot be extended to global solutions. This intuition suggests the following local filtering algorithm  according to which one filters pairs of bags locally by taking pullbacks along adhesions and then retaining only those local solutions which are in the image of the projection maps of the pullbacks. 
\begin{algorithm}
    \caption{Filtering}\label{algorithm:single-edge-filtering}
    \begin{algorithmic}
        \STATE \textbf{Input:} a \(\finset\)-valued structured co-decomposition \(d: (\int G)^{op} \to \finset\) and an edge \(e = xy\) in \(G\).
        \STATE -- compute the pullback \(dx \xleftarrow{\pi_x} dx \times_{de} dy \xrightarrow{\pi_y} dy\) of the cospan \(dx \xrightarrow{f_x} de_{x,y} \xleftarrow{f_y} dy\) associated to each edge \(e = xy\) in \(G\)
        \STATE -- let \(d_e\) be the decomposition obtained by replacing the cospan associated to \(e\) in \(d\) by the the cospan \[\image \pi_x \xrightarrow{f_x|_{\image \pi_x}} de_{x,y} \xleftarrow{f_y|_{\image \pi_y}} \image \pi_y,\]
        \RETURN \(d_e\).
    \end{algorithmic}
\end{algorithm}

Recursive applications of Algorithm~\ref{algorithm:single-edge-filtering} allow us to obtain the following algorithm for sheaf decision on \textit{tree}-shaped structured decompositions.
\begin{lemma}\label{lemma:tree-algorithm-correctness}
There is an algorithm (namely Algorithm~\ref{algorithm:filtering}) which correctly computes \(\mathcal{A}d\) on any input \(\finset\)-valued structured co-decomposition \(d \colon (\int G)^{op} \to \finset\) in time \(\Oh(\kappa^2)|EG|\) where \(\kappa = \max_{x \in VG} |dx|\) whenever \(G\) is a finite, irreflexive, directed tree.
\end{lemma}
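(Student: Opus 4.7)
The plan is to implement Algorithm~\ref{algorithm:filtering} as a two-pass dynamic programming procedure on $G$ that invokes Algorithm~\ref{algorithm:single-edge-filtering} a total of $2|EG|$ times. Concretely, I would fix an arbitrary root $r \in VG$; perform a \emph{bottom-up pass} visiting the edges of $G$ in reverse BFS order, invoking the single-edge filter on each edge $e = xy$ (with $y$ the parent of $x$) to shrink $dx$ and $dy$ to the images of the pullback projections; then perform a \emph{top-down pass} in BFS order, reapplying the single-edge filter on each edge to propagate the restrictions of the parent's bag downward.

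For correctness, I would show by induction that after these two passes each bag equals $\mathcal{A}_0 d(x) = \image \lambda_x^d$, and each adhesion is restricted accordingly. The key observation is that when $G$ is a tree, the limit of $d \colon (\int G)^{op} \to \finset$ decomposes as a sequence of pairwise pullbacks taken along the edges in any rooted traversal order: since removing any edge disconnects the tree, there are no cyclic consistency constraints to resolve. The bottom-up pass therefore produces at each vertex $x$ exactly the subset of $dx$ whose elements extend consistently to the subtree rooted at $x$ (inducting on the height of $x$); a symmetric argument applied to the top-down pass then yields the elements of $dx$ that extend to the \emph{whole} tree, which is precisely $\image \lambda_x^d$ by the universal property of the limit. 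Correct restriction of the adhesion maps is then immediate from the definition of Algorithm~\ref{algorithm:single-edge-filtering}.

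For the running time, each invocation of the single-edge filter on $e = xy$ requires computing the pullback of a cospan of finite sets of size at most $\kappa$; naively testing each pair in $dx \times dy$ for compatibility costs $\Oh(\kappa^2)$. Since there are $2|EG|$ invocations and all auxiliary traversal bookkeeping is linear in $|EG|$, the overall runtime is $\Oh(\kappa^2)|EG|$. The main obstacle I expect is the correctness argument rather than the runtime bound: one must rule out the possibility that local, pairwise filtering misses a global inconsistency, and this is where the tree assumption does all the work, since in the presence of cycles one would be forced either to iterate the sweeps many times or to pay a cost that is exponential in a suitable feedback parameter, as in the full statement of Theorem~\ref{thm:filtering-algorithm}.
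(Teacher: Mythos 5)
Your proposal is correct, but it takes a genuinely different route from the paper's proof, both in the filtering schedule and in the correctness argument. The paper runs Algorithm~\ref{algorithm:filtering} as a \emph{single} pass over an arbitrary enumeration of the edges (exactly $|EG|$ pullbacks) and proves correctness by induction on the tree, splitting at the \emph{last} processed edge and invoking the categorical apparatus of Theorem~\ref{thm:A-is-functor} (the factorization of the unit $\eta_d$ through $\alpha^1,\alpha^2$ and the identification $\lim d \cong \lim d_1 \times_{de} \lim d_2$, together with surjectivity of the comparison map $u$); this directly certifies the bags incident to the last edge, which is what the decision procedure of Theorem~\ref{thm:filtering-algorithm} actually consumes downstream. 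You instead root the tree and make two passes ($2|EG|$ calls to Algorithm~\ref{algorithm:single-edge-filtering}), arguing elementarily in the style of directional arc consistency on tree-structured constraint networks: the bottom-up pass gives subtree-extendability, the top-down pass gives global extendability by splicing a subtree solution with a global one across the unique connecting edge. What your schedule buys is that \emph{every} bag provably equals $\image \lambda^d_x$, i.e.\ you recover $\mathcal{A}_0 d$ on the nose at all vertices, at the cost of a factor $2$ in filter calls (immaterial for the $\Oh(\kappa^2)|EG|$ bound) and of fixing a root and traversal order; the paper's schedule is leaner and order-independent, but its induction pins down only the two bags at the last edge. One point to tighten: since Algorithm~\ref{algorithm:single-edge-filtering} shrinks \emph{both} feet of the cospan, during the bottom-up pass a child bag may already lose elements on account of a partially filtered parent, so the correct invariant is not ``bag at $x$ equals the subtree-extendable set'' but the sandwich ``globally-extendable $\subseteq$ bag at $x$ $\subseteq$ subtree-extendable,'' the lower containment holding because the filter never deletes a value whose global partner survives in the neighbouring bag; with this adjustment your top-down induction (root bag correct after the first pass, then splice along each parent--child edge, noting the parent bag cannot shrink further) goes through, and your $\Oh(\kappa^2)$-per-edge runtime accounting matches the paper's.
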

\begin{proof}
Fixing any enumeration \(e_1, \dots, e_m\) of the edges of \(G\), we will show that that following recursive procedure (Algorithm~\ref{algorithm:filtering}) satisfies the requirements of the lemma when called on inputs \((d, (e_1, \dots, e_m))\). 
\begin{algorithm}
    \caption{Recursive filtering}\label{algorithm:filtering}
    \begin{algorithmic}
        \STATE \textbf{Input:} a \(\finset^{op}\)-valued structured decomposition \(d: (\int G)^{op} \to \finset\) and a list \(\ell\) of edges of \(G\).
        \IF{\(\ell\) is empty}
            \RETURN \(d\)
        \ELSE
            \STATE -- split \(\ell\) into its head-edge \(e\) and its tail \(\ell'\) 
            \STATE -- let \(d_{e}\) be the output of Algorithm~\ref{algorithm:single-edge-filtering} on inputs \((d, e)\)
            \STATE -- recursively call Algorithm~\ref{algorithm:filtering} on inputs \((d_e, \ell')\).  
        \ENDIF
    \end{algorithmic}
\end{algorithm}
Notice that in Algorithm~\ref{algorithm:single-edge-filtering} we always have an injection \(\image \pi_x \hookrightarrow dx\). By this fact and since Algorithm~\ref{algorithm:filtering} amounts to computing \(|EG|\) pullbacks in \(\finset\) and hence the running time bound is evident. Now suppose the input decomposition is tree-shaped (i.e. suppose that \(G\) is a tree); we will proceed by induction to show that the output of Algorithm~\ref{algorithm:single-edge-filtering} is isomorphic to \(\mathcal{A}d\). If \(G \cong K_1\) (the one-vertex complete graph), then the algorithm terminates immediately (since \(G\) has no edges) and returns \(d\). This establishes the base-case of the induction since \(Ad \cong d\) whenever \(G \cong K_1\). Now suppose \(|EG| > 0\) and let \(e = x_1x_2\) be the last edge of the tree \(G\) over which Algorithm~\ref{algorithm:filtering} iterates upon. The removal of \(e\) splits \(G\) into two sub-trees \(T_1 \hookrightarrow G\) and \(T_2 \hookrightarrow G\) containing the nodes \(x_1\) and \(x_2\) respectively. In turn these two trees induce two sub-decompositions \(\iota_i \colon d_i \hookrightarrow d \) of \(d\); these are given by the following composite functors: \[d_1 \colon (\int T_1)^{op} \hookrightarrow (\int G)^{op} \to \finset \quad \text{ and } \quad d_2 \colon (\int T_2)^{op} \hookrightarrow (\int G)^{op} \to \finset.\] 
By the induction hypothesis we have that, for each \(i \in \{1,2\}\), running Algorithm~\ref{algorithm:single-edge-filtering} on \(d_i\) yields an object \(\delta_i\) isomorphic to \(\mathcal{A} d_i \colon (\int T_i)^{op} \to \finset\). Thus, by Point~\ref{thm:A-is-functor-point-2} of Theorem~\ref{thm:A-is-functor}, the span of structured decompositions \(d_1 \xhookrightarrow{\iota_1} d \xhookleftarrow{\iota_2} d_2\) yields the following commutative diagram.
\begin{equation}\label{diagram:tree-alg-lemma-1}
\begin{tikzcd}
	{\mathcal{T}d_1} &&& {\mathcal{A}d_1} &&& {d_1} \\
	\\
	{\mathcal{T}d} &&& {\mathcal{A}d} &&& d \\
	\\
	{\mathcal{T}d_2} &&& {\mathcal{A}d_1} &&& {d_2}
	\arrow["{\alpha^1_{d_1}}", from=1-7, to=1-4]
	\arrow["{\alpha^2_{\mathcal{A}d_1}}", from=1-4, to=1-1]
	\arrow["{\alpha^1_{d_2}}", from=5-7, to=5-4]
	\arrow["{\alpha^2_{\mathcal{A}d_2}}", from=5-4, to=5-1]
	\arrow["{\eta_{d_1}}"{pos=0.7}, curve={height=30pt}, from=1-7, to=1-1]
	\arrow["{\eta_{d_2}}"{pos=0.7}, shift right=1, curve={height=30pt}, from=5-7, to=5-1]
	\arrow["{\iota_1}"{pos=0.3}, from=1-7, to=3-7]
	\arrow["{\iota_1}"'{pos=0.3}, from=5-7, to=3-7]
	\arrow["{\alpha^1_{d}}", from=3-7, to=3-4]
	\arrow["{\alpha^2_{\mathcal{A}d}}", from=3-4, to=3-1]
	\arrow["{\eta_d}"{pos=0.7}, curve={height=30pt}, from=3-7, to=3-1]
	\arrow["{\mathcal{A}_{\iota_1}}"{pos=0.3}, from=1-4, to=3-4]
	\arrow["{\mathcal{A}_{\iota_2}}"'{pos=0.3}, from=5-4, to=3-4]
	\arrow["{\mathcal{T}_{\iota_1}}"{pos=0.3}, from=1-1, to=3-1]
	\arrow["{\mathcal{T}_{\iota_2}}"'{pos=0.3}, from=5-1, to=3-1]
\end{tikzcd}
\end{equation}
Now consider the cospan \(dx_1 \xrightarrow{e_{x_1}} de \xleftarrow{e_{x_1}} dx_2\). Passing it through the functor \(\const\) to obtain morphisms \[\const e_{x_1} \colon \const d' \to d_i\] such that the following commutes.  
\[\begin{tikzcd}
	{\mathcal{T}d_1} &&& {\mathcal{A}d_1} &&& {d_1} \\
	\\
	{\mathcal{T}d} &&& {\mathcal{A}d} &&& d && {\const de} \\
	\\
	{\mathcal{T}d_2} &&& {\mathcal{A}d_1} &&& {d_2}
	\arrow["{\alpha^1_{d_1}}", from=1-7, to=1-4]
	\arrow["{\alpha^2_{\mathcal{A}d_1}}", from=1-4, to=1-1]
	\arrow["{\alpha^1_{d_2}}", from=5-7, to=5-4]
	\arrow["{\alpha^2_{\mathcal{A}d_2}}", from=5-4, to=5-1]
	\arrow["{\eta_{d_1}}"{pos=0.7}, curve={height=30pt}, from=1-7, to=1-1]
	\arrow["{\eta_{d_2}}"{pos=0.7}, shift right=1, curve={height=30pt}, from=5-7, to=5-1]
	\arrow["{\iota_1}"{pos=0.3}, from=1-7, to=3-7]
	\arrow["{\iota_1}"'{pos=0.3}, from=5-7, to=3-7]
	\arrow["{\alpha^1_{d}}", from=3-7, to=3-4]
	\arrow["{\alpha^2_{\mathcal{A}d}}", from=3-4, to=3-1]
	\arrow["{\eta_d}"{pos=0.7}, curve={height=30pt}, from=3-7, to=3-1]
	\arrow["{\mathcal{A}_{\iota_1}}"{pos=0.3}, from=1-4, to=3-4]
	\arrow["{\mathcal{A}_{\iota_2}}"'{pos=0.3}, from=5-4, to=3-4]
	\arrow["{\mathcal{T}_{\iota_1}}"{pos=0.3}, from=1-1, to=3-1]
	\arrow["{\mathcal{T}_{\iota_2}}"'{pos=0.3}, from=5-1, to=3-1]
	\arrow["{\const e_{x_2}}"', from=3-9, to=1-7]
	\arrow["{\const e_{x_2}}", from=3-9, to=5-7]
\end{tikzcd}\]

These observations imply that the following is a pullback diagram in \(\finset\).
\begin{equation}\label{diagram:limd-is-a-pullback}
\begin{tikzcd}
	& {\mathcal{T}d_1(K_1) = \lim d_1} \\
	{\mathcal{T}d(K_1) = \lim d} && de \\
	& {\mathcal{T}d_2(K_1) =  \lim d_2}
	\arrow[from=1-2, to=2-3]
	\arrow[from=3-2, to=2-3]
	\arrow["{\rho_1}", from=2-1, to=1-2]
	\arrow["{\rho_2}"', from=2-1, to=3-2]
\end{tikzcd}
\end{equation}
We can factor this diagram further as follows (where \(\lambda_i\) is the leg of the cone with apex \(\lim d_i\)).
\[\begin{tikzcd}
	& {\mathcal{T}d_1(K_1) = \lim d_1} \\
	&& {\mathcal{A}d_1x_1} \\
	{\mathcal{T}d(K_1) = \lim d} &&& de \\
	&& {\mathcal{A}d_2x_2} \\
	& {\mathcal{T}d_2(K_1) =  \lim d_2}
	\arrow[curve={height=-24pt}, from=1-2, to=3-4]
	\arrow[curve={height=24pt}, from=5-2, to=3-4]
	\arrow["{\rho_1}", from=3-1, to=1-2]
	\arrow["{\rho_2}"', from=3-1, to=5-2]
	\arrow["{\lambda_1}", two heads, from=1-2, to=2-3]
	\arrow[from=2-3, to=3-4]
	\arrow["{\lambda_2}"', two heads, from=5-2, to=4-3]
	\arrow[from=4-3, to=3-4]
\end{tikzcd}\]
From which we observe that all that remains to be shown is that the unique pullback arrow \(u\) shown in the following diagram 
\[\begin{tikzcd}
	& {\mathcal{T}d_1(K_1) = \lim d_1} \\
	&& {\mathcal{A}d_1x_1} \\
	{\mathcal{T}d(K_1) = \lim d} & {\mathcal{A}d_1x_1\times_{de} \mathcal{A}d_2x_2} && de \\
	&& {\mathcal{A}d_2x_2} \\
	& {\mathcal{T}d_2(K_1) =  \lim d_2}
	\arrow[curve={height=-24pt}, from=1-2, to=3-4]
	\arrow[curve={height=24pt}, from=5-2, to=3-4]
	\arrow["{\rho_1}", from=3-1, to=1-2]
	\arrow["{\rho_2}"', from=3-1, to=5-2]
	\arrow["{\lambda_1}", two heads, from=1-2, to=2-3]
	\arrow[from=2-3, to=3-4]
	\arrow["{\lambda_2}"', two heads, from=5-2, to=4-3]
	\arrow[from=4-3, to=3-4]
	\arrow["{\pi_1}"', from=3-2, to=2-3]
	\arrow["{\pi_2}", from=3-2, to=4-3]
	\arrow["u", color={rgb,255:red,214;green,92;blue,92}, dashed, from=3-1, to=3-2]
\end{tikzcd}\]
is a surjection. To see why this suffices, notice that, if \(u\) is surjective, then the entire claim will follow since we would have 
\[
    \lambda_i \rho_i = \pi_i u \implies \image \lambda_i \rho_i = \image \pi_i u  = \image \pi_i|_{\image u} = \image \pi_i.
\]
But then this concludes the proof since the surjectivity of \(u\) is immediate once we recall that \(\mathcal{A}d_ix_i\) was defined as \(\image \lambda_i\) and that \(\lim d = \lim d_1 \times_{de} \lim d_2\) (as established in Diagram~\ref{diagram:limd-is-a-pullback}).
\end{proof}

Notice that Lemma~\ref{lemma:tree-algorithm-correctness} does not na\"ively lift to decompositions of arbitrary shapes. For instance consider the following example of a cyclic decomposition of a \(5\)-cycle graph with vertices \(\{x_1, \dots, x_5\}\).
\[ \begin{adjustbox}{scale=0.75}
    \begin{tikzcd}
	& {x_1} &&&& {x_2} \\
	&&& {x_1x_2} \\
	& {x_5x_1} &&&& {x_2x_3} \\
	&& {x_4x_5} && {x_3x_4} \\
	{x_5} &&& {x_4} &&& {x_3}
	\arrow[from=1-2, to=3-2]
	\arrow[from=1-2, to=2-4]
	\arrow[from=1-6, to=2-4]
	\arrow[from=1-6, to=3-6]
	\arrow[from=5-7, to=3-6]
	\arrow[from=5-7, to=4-5]
	\arrow[from=5-4, to=4-5]
	\arrow[from=5-4, to=4-3]
	\arrow[from=5-1, to=4-3]
	\arrow[from=5-1, to=3-2]
    \end{tikzcd}
\end{adjustbox} \]
Passing this decomposition through the two-coloring sheaf \(\cat{SimpFinGr}(-, K_2)\) (i.e. applying the functor \(\sd\cat{SimpFinGr}(-, K_2)\)) we obtain a structured \textit{co}-decomposition \(\delta\) valued in \(\finset\) and whose bags are all two element sets corresponding to the two proper colorings of any edge. Now notice that, since  odd cycles are not two-colorable at least one of the bags of \(\mathcal{A}\delta\) will be empty. In contrast, it is easy to verify that the output of  Algorithm~\ref{algorithm:single-edge-filtering} on \(\delta\) is isomorphic to \(\delta\) itself. 

Although these observations might seem to preclude us from obtaining algorithmic results on decompositions that are not tree-shaped, if we are willing to accept slower running times (which are still \(\fpt\)-time, but under a double parameterization rather than the single parameterization of Lemma~\ref{lemma:tree-algorithm-correctness}), then we can efficiently solve the sheaf decision problem on decompositions of other shapes as well. This is Theorem~\ref{thm:filtering-algorithm} which we are finally ready to prove. For clarity, we wish to point out that the following result is exactly the same as Lemma~\ref{lemma:tree-algorithm-correctness} when we are given a decomposition whose shape is a tree: trees trivially have feedback vertex number zero.

\begin{theorem}[Theorem~\ref{thm:filtering-algorithm} restated]

\end{theorem}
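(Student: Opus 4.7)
The plan is to bootstrap Lemma~\ref{lemma:tree-algorithm-correctness} from the tree case to the general case by using the feedback vertex set $S$ as a branching set. First I would apply the given subroutine $\Aa_{\Fa}$ once to each vertex bag $dx$ and each adhesion $de$, in total time $\Oh(\max_{x \in VG}\alpha(dx) \cdot |EG|)$, to assemble a $\finset$-valued structured co-decomposition $\delta \colon (\int G)^{op} \to \finset$ whose bags are the sets $\Fa(dx)$ and whose cospan legs come from applying $\Fa$ to the adhesion maps of $d$. By point~\ref{thm:A-is-functor-point-3} of Theorem~\ref{thm:A-is-functor}, the sheaf hypothesis on $\Fa$ reduces the problem to deciding whether the limit $\lim \delta$ is nonempty.

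Next I would enumerate all tuples $\sigma \in \prod_{x \in S} \Fa(dx)$; there are at most $\kappa^{|S|}$ of them. For each $\sigma$ I construct a restricted co-decomposition $\delta_\sigma$ by replacing each bag $\delta(x)$ with the singleton $\{\sigma(x)\}$ for $x \in S$ and propagating this restriction through every incident adhesion (trimming both the adhesion set and the opposite bag to the preimages and images dictated by the cospan legs). Edges with both endpoints in $S$ are verified directly in constant time, while edges with exactly one endpoint in $S$ become boundary restrictions on the forest $G - S$. Since $G - S$ is acyclic, I then run Algorithm~\ref{algorithm:filtering} on $\delta_\sigma$; by Lemma~\ref{lemma:tree-algorithm-correctness} this computes $\Aa \delta_\sigma$ in time $\Oh(\kappa^2 |EG|)$ and in particular detects whether any bag of $\Aa\delta_\sigma$ is empty, which by point~\ref{thm:A-is-functor-point-3} of Theorem~\ref{thm:A-is-functor} is equivalent to deciding whether $\lim \delta_\sigma$ is nonempty. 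The algorithm outputs $\top$ iff at least one $\sigma$ yields a nonempty $\lim \delta_\sigma$, and summing the costs gives the claimed bound.

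Correctness rests on the set-theoretic identity $\lim \delta = \bigsqcup_\sigma \lim \delta_\sigma$: a global section of $\delta$ is a tuple of compatible elements, one per bag, which can be uniquely sorted by its values on $S$. Hence $\lim \delta \neq \emptyset$ iff $\lim \delta_\sigma \neq \emptyset$ for some $\sigma$. Because $S$ is a feedback vertex set, every cycle of $G$ passes through $S$, so after fixing $\sigma$ the only nontrivial constraints on $\delta_\sigma$ live either on the forest $G - S$ (where Lemma~\ref{lemma:tree-algorithm-correctness} applies directly) or on edges incident to or inside $S$ (discharged in constant time per edge by the restriction step).

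The main obstacle I expect is the careful bookkeeping required to verify that the local restriction $\delta \mapsto \delta_\sigma$ faithfully represents the $\sigma$-fiber of $\lim \delta$, so that running Algorithm~\ref{algorithm:filtering} on the forest $G - S$ with singleton boundary values genuinely certifies nonemptiness of $\lim \delta_\sigma$ rather than some weaker local condition. This amounts to checking that pullback along a singleton along a cospan leg correctly propagates the sheaf-theoretic compatibility requirements at every edge touching $S$, which follows from the universal property of pullbacks in $\finset$ together with the fact that $\delta_\sigma$ is still a subfunctor of $\delta$. Once this verification is in place, the running-time analysis and the correctness argument combine to yield the stated $\fpt$-algorithm.
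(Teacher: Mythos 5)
Your proposal is correct and follows essentially the same route as the paper: compute $\sd_{\Fa}d$ with one call to $\Aa_{\Fa}$ per piece, enumerate the at most $\kappa^{|S|}$ choices of sections over the feedback vertex set, replace those bags by singletons, reduce to the forest case handled by Algorithm~\ref{algorithm:filtering} via Lemma~\ref{lemma:tree-algorithm-correctness}, and take the disjunction, using Point~\ref{thm:A-is-functor-point-3} of Theorem~\ref{thm:A-is-functor} together with the sheaf condition for correctness. Your set-theoretic partition $\lim\delta=\bigsqcup_{\sigma}\lim\delta_{\sigma}$ plays exactly the role of the paper's Claim~\ref{claim:thm:filtering-algorithm} and its $\bigvee_{\sigma}$ formula, so the two arguments coincide in substance.
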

\begin{proof}[Proof of Theorem~\ref{thm:filtering-algorithm}]
Recall that, by Point~\ref{thm:A-is-functor-point-3} of Theorem~\ref{thm:A-is-functor} the following diagram commutes. 
\[  \]
Our proof will will rely on this fact together with the following claim. 
\begin{claim}\label{claim:thm:filtering-algorithm}
    Consider the image of \(d\) under \(\sd_{\Fa}\) and view it as a \(\finset\)-valued structured \textit{co}-decomposition \(\sd_{\Fa} d \colon (\int G)^{op} \to \finset\), fix any vertex \(x \in VG\) and any enumeration \(\ell = (e_1, \dots, e_n)\) of the edges incident with \(x\). Let \(\gamma_s\) be the output of Algorithm~\ref{algorithm:filtering} when applied to the input \((\sd_{\Fa} d, \ell)\) and, letting \(G' \hookrightarrow G\) be the subgraph of \(G\) obtained by removing all edges incident with \(x\), define \(\delta_s\) to be the decomposition \(\delta_s \colon (\int  G')^{op} \hookrightarrow (\int G)^{op} \xrightarrow{\gamma_s} \finset.\) \\ 
    If \(\sd_{\Fa}d(x)\) is a singleton, then \(\land \circ \mathsf{dec}^{op} \circ \Aa \circ \sd_{\Fa})(d) = (\land \circ \mathsf{dec}^{op} \circ \Aa)(\delta_s).\)
\end{claim}
\begin{proof}[Proof of Claim~\ref{claim:thm:filtering-algorithm}]
    If the bag \(\sd_{\Fa}dx\) has precisely one section, let's call it \(\xi\), then every matching family for \(\Fa\) on \(d\) must involve \(\xi\). Notice that we trivially have that \(\lim \sd_{\Fa}d = \lim \gamma_s\) since the local pullbacks performed by Algorithm~\ref{algorithm:single-edge-filtering} (and hence Algorithm~\ref{algorithm:filtering} cannot change the overall limit) and hence we have \[ (\land \circ \mathsf{dec}^{op} \circ \Aa \circ \sd_{\Fa})(d) = (\land \circ \mathsf{dec}^{op} \circ \Aa)(\gamma_s).\] But now notice that, since \(\sd_{\Fa}dx\) is a singleton, any collection of sections \((\zeta \in \delta_s(z))_{z \in VG'}\) must give rise to a matching family for \(d\) since, by the construction of \(\delta_s\), we have that each such section \(\zeta\) in the family must agree with \(\xi\). But then, as desired, we have proven that \[ (\land \circ \mathsf{dec}^{op} \circ \Aa \circ \sd_{\Fa})(d) = (\land \circ \mathsf{dec}^{op} \circ \Aa)(\delta_s).\]
\end{proof}
In light of this result, notice that, if \(\Fa ds\) is a singleton for each \(s \in S\) (where recall that \(S\) is a feedback vertex set in \(G\)), then, by repeatedly applying Claim~\ref{claim:thm:filtering-algorithm} until the output decomposition is a forest and then applying the algorithm of Lemma~\ref{lemma:tree-algorithm-correctness}, we can correctly solve the sheaf decision problem in time \(\Oh(\max_{x \in VG}\alpha(dx) + \kappa^2)|EG|\) (since this entire procedure amounts to one call to algorithm \(\Aa_{\Fa}\) in order to compute \(\sd_{\Fa} d\) and \(\Oh(|EG|)\)-calls to the edge-filtering algorithm -- i.e. Algorithm~\ref{algorithm:single-edge-filtering}).

Now, at an intuitive level, if there exists \(s \in S\) such that \(\Fa ds\) is not a singleton, then we can simply repeat the above procedure once for each section in \(\Fa ds\). Stating this more formally, define for each \(\sigma \in \prod_{s \in S} \Fa ds\) the \(\finset\)-valued structured co-decomposition \(\omega_\sigma \colon (\int G)^{op} \to \finset\) by replacing all bags of the form \(\sd_{\Fa} d s\) in \(\sd_{\Fa} d\) with the bag \(\omega_\sigma s := \{\sigma_s\}\) (where \(\sigma_s\) is the section at index \(s\) in the tuple \(\sigma\)). Then, since \(\Fa\) is a sheaf, it follows that \[\mathsf{dec}^{op}\Fa c = \bigvee_{\sigma \in \prod_{s \in S} \Fa ds} (\land \circ \sd \mathsf{dec}^{op} \circ \mathcal{A})(\omega_\sigma).\] 
This will have the desired running time since it corresponds to first computing \(\sd_{\Fa} d\) (which we can do in time \(\Oh(\max_{x \in VG}\alpha(dx))\) using algorithm \(\Aa_{\Fa}\)) and then running Algorithm~\ref{algorithm:filtering} (whose correctness and running time are established by Claim~\ref{claim:thm:filtering-algorithm} and Lemma~\ref{lemma:tree-algorithm-correctness}) at most \((|\prod_{s \in S} \Fa ds| \in \Oh( k^{|S|} ))\)-many times.
\end{proof}

The reader might notice that we have so far avoided mentioning notions of \textit{width} of the input decompositions and that indeed these considerations do not appear in the statements of the algorithms of Lemma~\ref{lemma:tree-algorithm-correctness} or Theorem~\ref{thm:filtering-algorithm}. This is for good reason: one should observe that it is not the width of the decompositions of the inputs that matters; instead it is the \textit{width of the decompositions of the solutions spaces} that is key to the algorithmic bounds. In categories (such as that of graphs, say) where objects come equipped with natural notions of `size', then one might expect that it would be convenient to state the running time of the algorithm in terms of the maximum bag size in the input decomposition. However, we maintain (in accordance with observations previously made by Bodlaender and Fomin~\cite{bodlaender2002tree}) that quantifying the running time of the algorithm in terms of the width of the decompositions of the inputs is misleading. To see why, consider the very simple, but concrete case of algorithms for coloring on tree decompositions. In this situation, it is perfectly fine to admit very large bags in our decomposition so long as the following two conditions are met: \begin{enumerate*}
    \item the local solution spaces (which are sets) associated to these bags are small and
    \item these solution spaces can be determined in a bounded amount of time.
\end{enumerate*} A trivial, but illuminating example of this phenomenon is the case in which we allow large bags (of unbounded size) in our decomposition as long as they consist of complete graphs: for such graphs we have only one proper coloring up to isomorphism and this can be determined in linear time.

\subsection{Implementation}\label{sec:implementation}
Compared to the traditional, combinatorial definition of graph decompositions~\cite{Bertele1972NonserialProgramming, halin1976s, RobertsonII}, our category theoretic formulation of structured decompositions has two advantages which we have already encountered.
\begin{enumerate}
    \item \textit{Object agnosticism} Structured decompositions allow us to describe decompositions of objects of any adhesive category and thus one doesn't need to define ad-hoc decompositions on a case-by-case basis whenever one encounters a new kind of combinatorial data.
    \item \textit{Functorial Algorithmics} The functoriality of \(\sd\) (i.e. of categories of structured decompositions) allows us to make explicit use of solution spaces and decompositions thereof. This allows us to state the correctness of algorithmic results as the commutativity of appropriate diagrams (e.g. Diagram~\eqref{diagram:algorithmic-goal}) from which one can moreover infer running time bottlenecks. 
\end{enumerate}
However, there is a further, more practical benefit of our category-theoretic perspective: it allows for a very smooth transition from mathematics to implementation. Indeed, our theoretical algorithmic results of Section~\ref{sec:deciding-sheaves} can be easily paired with corresponding implementations~\cite{str_decomps_julia_implementation} in the AlgebraicJulia ecosystem~\cite{evan_patterson_algebraicjuliacatlabjl_2020}. As a proof of concept, we have implemented structured decompositions and Algorithm~\ref{algorithm:filtering} for \(\textsc{SheafDecision}\) on tree-shaped decompositions which demonstrates the seamless transition from mathematics to code which one can experience one category theory is embraced as the core theoretical abstraction. We encourage the reader to consult the relevant repository~\cite{str_decomps_julia_implementation} for further details of the implementation.

\section{An Open Problem on the Shapes of the Decompositions}
Theorem~\ref{thm:filtering-algorithm} yields \(\fpt\)-time algorithms for problems encoded as sheaves on adhesive categories. When instantiated, this allows us to obtain algorithmic results on many mathematical objects of algorithmic interest such as: \begin{enumerate*} \item databases,\item simple graphs, \item directed graphs, \item directed multigraphs, \item hypergraphs, \item directed hypergraphs, \item simplicial complexes, \item circular port graphs~\cite{libkind2021operadic} and \item half-edge graphs \end{enumerate*}. However one should note that these parameterizations are only useful if: (1) not all objects in this class have bounded structured decomposition width and (2) only if the feedback vertex number of the class of decomposition shapes is bounded. Notice that it is easy to verify that, for all the examples mentioned above, not all objects have bounded \textit{tree-shaped} decompositions. However, when it comes to decomposition shapes that are not trees, the question of whether all objects have bounded width with respect to a fixed class of decomposition shapes, it not obvious. Indeed, this will motivate Open Problem~\ref{problem:shape-graph-classes} which will arise naturally by the end of this section in which we will enquire about how our algorithmic results -- which involve arbitrary decomposition shapes -- compare to more traditional results in graph theory which solely make use of tree-shaped decompositions. In particular we will now briefly argue that, for the case of \textit{graph decomposition width} (as defined by Carmesin~\cite{CARMESIN2022101}), our results end up yielding \(\fpt\) algorithms only on classes of graphs which have bounded tree-width. Determining whether these observations can be carried over to more general classes of objects is a fascinating new direction for work (see Open Problem~\ref{problem:shape-graph-classes}).

To see this, first of all note that, for any \(\fingr\)-valued decomposition \(d: \int H \to \fingr \) of a graph \(G\), it is easy to see that the treewidth $\tw(H)$ of our shape graph $H$ is at most its feedback vertex number. 
Furthermore, if the shape graph $H$ has treewidth at most $t_H$ and the input graph $G$ has $H$-width at most $t_G$, then we can easily build a tree decomposition of $G$ of width $t_H \cdot t_G$, implying that we only compute on graphs with bounded treewidth.
Since we are now only dealing with graphs, it will be convenient to switch to Carmesin's~\cite{CARMESIN2022101} more combinatorial notation\footnote{We refer the reader to Bumpus, Kocis and Master~\cite{structured-decompositions} for details of how to choose a graph-theoretic instantiation of our categorical notation which neatly corresponds to that of Carmesin's notation of graph decompositions~\cite{CARMESIN2022101}.}. Let $\Ta^T=(T,(X_t)_{t \in V(T)})$ be a tree decomposition of $H$ and $\Ta^G=(H,(Y_t)_{t \in V(H)})$ be a $H$-decomposition of $G$. We claim that $\Ta=(T, (\cup_{t' \in X_t} Y_{t'} )_{t \in V(T)}$ is a tree decomposition of $G$. Coverage is easy as each bag of $\Ta^H$ is contained in at least one bag of $\Ta$ as all bags are covered in $\Ta^G$. For the coherence, we argue as follows. Assume a vertex $v \in V(G)$ is in two bags $t, t' \in V(\Ta)$, but not in a bag $t''$ on the path in $\Ta$ from $t$ to $t'$. By the construction of the bags of $\Ta$, there is a vertex $u \in Y_t$ whose bag contains $v$ and similarly a vertex $u' \in Y_{t'}$ containing $v$. By the coherence of $\Ta^H$, there has to be a path $p$ between $u$ and $u'$ in $\Ta^H$ such that $v$ is in all bags $Y_{\tilde t}$ for $\tilde t \in V(p)$ (the subgraph of $H$ induced by the bags containing $v$ is connected).  As $v$ is not in the bag of $t''$ of $\Ta$, no vertex of $V(p)$ is in $X_{t''}$ (of $\Ta^T$), which contradicts that the removal of $X_{t''}$ separates $u \in X_t \setminus X_{t''}$ from $X_{t'} \setminus X_{t''} \ni u'$.

Hence, if the treewidth of the graphs of the class $\mathcal{H}$ allowed for the shape of the structured decomposition is bounded, the graphs with bounded $\mathcal{H}$-width have bounded treewidth. Hence, to obtain new FPT algorithms the treewdith of the graphs of the class $\mathcal{H}$ should not be bounded. On the other hand, the class $\mathcal{H}$ has to be quite restricted, as otherwise each graph will have bounded $\mathcal{H}$-width. For example, if $\mathcal{H}$ contains the $n \times n$-grid, every $n$-vertex graph was width $1$ in the class, as we can see as follows. Consider the $n \times n$-grid $H$ and let $v_{i,j}$ for $1 \le i,j \le n$ be such that $v_{i-1,j}v_{i,j} \in E(H)$ for $2 \le i \le n$ and $v_{i,j-1}v_{i,j} \in E(H)$ for $2 \le j \le n$.  We claim that $(H, (\{ i,j \})_{v_{i,j} \in V(H)})$ is a $H$-decomposition of every graph with vertex set $\{1, \dots, n \}$. Coverage is easy to see as for each $1 \le i,j \le n$, we constructed a bag containing $i$ and $j$ and hence the complete graph over $\{1, \dots, n \}$ is covered. For the coherence, notice that a vertex $k$ is contained in the bags $X_{k,j}$ for $1 \le j \le n$ and $X_{i,k}$ for $1 \le i \le n$, which build a cross in the grid $H$ and hence these bags are connected.

This construction can be generalized to graphs having the $n \times n$-grid as a minor by making the bag of a vertex contracted with $v_{i,j}$ equal to the bag $X_{v_{i,j}}$ and making all bags of vertices that are removed empty. Hence, all graphs have planar-width at most one. This motivates the following fascinating open problem. 
\begin{problem}\label{problem:shape-graph-classes}
Does there exists an adhesive category \(\cat{C}\) such that there is a class \(\chi\) of objects in \(\cat{C}\) and a graph class \(\Ga\) of bounded feedback vertex number such that the following two conditions hold simultaneously?  
\begin{itemize}
    \item The class \(\chi\) has \textbf{unbounded} \textit{tree-shaped} structured decomposition width and
    \item the class \(\chi\) has \textbf{bounded} \(\Ga\)-shaped structured decomposition width.
\end{itemize}
\end{problem}

\section{Discussion}\label{sec:discussion}
Our main contribution is to bridge the ``\textbf{structure}'' and ``\textbf{power}'' communities by proving an algorithmic meta-theorem (Theorem~\ref{thm:filtering-algorithm}) which informally states that decision problems encoded as sheaves (\textbf{Representational Compositionality}) can be solved by dynamic programming (\textbf{Algorithmic Compositionality}) in linear time on classes of inputs which admit structured decompositions of bounded width and whose decomposition shape has bounded feedback vertex number (\textbf{Structural Compositionality}). Our results thus bridge the mathematical and linguistic differences of these two communities -- of ``structure'' and ``power'' --  by showing how to use category theory and sheaf theory to amalgamate three kinds of compositionality found in mathematics and theoretical computer science. This is summarized at a very high level via the following diagram (i.e.  Diagram~\ref{diagram:ENGLISH} which was formalized as Diagram~\ref{diagram:algorithmic-goal-NO-COLIMITS}). 

\adjustbox{scale=1.5, max width=\textwidth}{

}

\paragraph{Future work.} Other than Open Problem~\ref{problem:shape-graph-classes}, directions for further work abound. Here we mention but a few obvious, yet exciting candidates. First of all it is clear that, although our meta-theoretic results achieve a remarkable degree of \textit{horizontal} generality (in terms of the kinds of mathematical structures to which they apply), their \textit{vertical} generality (the breadth of problems which can be solved) is still surpassed by more traditional results such as Courcelle's theorem~\cite{CourcelleBook}. It is a fascinating direction for further work to understand the connection between these model-theoretic approaches and our category- and sheaf-theoretic ones. Furthermore, we provide the following two more concrete lines of future work.
\begin{enumerate}
    \item Studying the connections between other kinds of topologies (such as those generated by more permissive containment relations such as graph minors) and the topologies given by structured decompositions.
    \item For other problems -- such as vertex cover or Hamilton Path -- we need more tools since, although they can be presented as presheaves, they fail to be sheaves. Indeed, note that this failure of compositionality ``on the nose'' is not specific to our approach and it exists in algorithmics as well: when solving Hamilton path by dynamic programming on a tree decomposition, one does not map the bags of the decomposition to local Hamilton Paths, but instead to disjoint collections of paths (see Flum and Grohe~\cite{flum2006parameterized} for details). The seasoned algorithmicist will point out that there is a template which can be often followed in order to choose the correct partial solutions: consider a global solution, induce it locally on the bags and then use this information to determine the obstructions to algorithmic compositionality. From our perspective this looks a lot like asking: ``what are the obstructions to the problem being a sheaf and how can we systematically track that information?'' Fortunately there are many powerful and well-developed tools from sheaf cohomology which appear to be appropriate for this task. This is an exciting new direction for research which we are already actively exploring. 
\end{enumerate}

\bibliographystyle{fundam}
\bibliography{biblio}


\appendix
\section{Notions from Sheaf Theory}\label{appendix:sheaf-theory}
Here we collect basic sheaf-theoretic definitions which we use throughout the document. We refer the reader to Rosiak's textbook~\cite{rosiak-book} for an introduction to sheaf theory which is suitable for beginners.  

\begin{definition}\label{def:pre-topology}
    Let \(\cat{C}\) be a category with pullbacks and let \(K\) be a function assigning to each object \(c\) in \(\cat{C}\) a family of sets of morphisms with codomain \(c\) called a \define{family of covering sets}. We call \(K\) a \define{a Grothendieck pre-topology on \(\cat{C}\) } if the following three conditions hold for all \(c \in C\).
    \begin{enumerate}[label=\textbf{( PT\arabic*)}]
        \item If \(f : c' \to c\) is an isomorphism, then \(\{f\} \in K(c)\).\label{axiom:pretopology-1}
        \item If \(S \in K(c)\) and \(g: b \to c\) is a morphism in \(C\), then the pullback of \(S\) and \(g\) is a covering set for \(b\); i.e. \(\{g \times_{c} f \mid f \in S\} \in K(b).\)\label{axiom:pretopology-2}
        \item If \(\{f_i : c_i \to c \mid i \in I\} \in K(c)\), then, whenever we are given \[\{g_{ij}: b_{ij} \to c_i | j \in J_i\} \in K(c_i)\] for all \(i \in I\), we must have \[\{f_i \circ g_{ij}: b_{ij} \to c_i \to c \mid i \in I, j \in J_i\} \in K(c).\]\label{axiom:pretopology-3}
    \end{enumerate}
\end{definition}

\begin{definition}\label{def:sieve}
A \define{sieve} on an object \(c \in \cat{C}\) is a family \(S_c\) of morphisms with codomain \(c\) which is closed under pre-composition.   
\end{definition}
Sieves are essentially what happens when we decide to just use those families that are ``saturated'' (in the sense that they are closed under pre-composition with morphisms in $\cat{C}$); they are introduced in order to present the definition of a Grothendieck topology (see \ref{def:Grothendieck-topology}), a revision of \ref{def:pre-topology}.

Observe that (at least when $\cat{C}$ is locally small) sieves can be identified with subfunctors of the representable hom-functor $\hom (\--, c) = y_c$ (note that the category $\mathbf{Set}^{\cat{C}^{op}}$ of presheaves on $\cat{C}$ has pullbacks, letting us drop, via this sieve approach, the assumption on $\cat{C}$ itself). Note also that if $S \subseteq \hom (\--, c)$ is a sieve on $c$ and $f: b \rightarrow c$ is any morphism to $c$, then the pullback sieve along $f$ 
\begin{equation*}
    f^*(S) = \{g | \codomain(g) = b, f \circ g \in S \} 
\end{equation*}
is a sieve on $b$. 

Typically, the definition of a Grothendieck topology is then given in terms of a function $J$ assigning sieves to the objects of $\cat{C}$ such that three axioms (maximal sieve, stability under base change, and transitivity) are satisfied. But really the stability axiom -- requiring that if $S \in J(c)$, then $f^*(S) \in J(b)$ for any arrow $f: b \rightarrow c$ -- just amounts to requiring that $J$ is in fact a \textit{functor} $J: \cat{C}^{op} \rightarrow \mathbf{Set}$, i.e., an object in the presheaf category $\mathbf{Set}^{\cat{C}^{op}}$. If we let \(\sieve : \cat{C}^{op} \to \mathbf{Set}\) be the functor which takes objects \(c\) of a small category \(\cat{C}\) to the set of all sieves on \(c\), and for any morphism $f : x \to y$ in \(\cat{C}\) use the map \[f^* \colon (S \; \in \sieve(y)) \mapsto \bigl( \{g \colon w \to x \mid (fg: w \to x \to y) \; \in S\} \in \sieve(x) \bigr),\] then we can give a somewhat condensed definition of a Grothendieck topology (by already building the usual stability axiom into the characterization of $J$ as a particular functor), as follows.

\begin{definition}\label{def:Grothendieck-topology}
Let \(\cat{C}\) be a category and consider a subfunctor \(J\) of \(\sieve: \cat{C}^{op} \to \mathbf{Set}\) of "permissible sieves". We call \(J\) a \define{Grothendieck topology on \(\cat{C}\)} if it satisfies the following two conditions: 
\begin{enumerate}[label=\textbf{( Gt\arabic*)}]
    \item the maximal sieve is always permissible; i.e. \(\{f \in \mathsf{Mor}(\cat{C}) \mid \codomain(f) = c \} \in J(c)\) for all \(c \in \cat{C}\) \label{axiom:GT1}
    \item for any sieve \(R\) on some object \(c \in \cat{C}\), if there is a permissible sieve \(S \in J(c)\) on \(c\) such that for all \((h: b \to c) \in S\) the sieve \(h^*(R)\) is permissible on \(b\) (i.e. \(h^*(R) \in J(b)\)), then \(R\) is itself permissible on \(c\) (i.e. \(R \in J(c)\)). \label{axiom:GT2}
\end{enumerate}
For any \(c \in \cat{C}\) we call the elements \(S \in J(c)\) \define{\(J\)-covers} or simply \define{covers}, if \(J\) is understood from context. A \define{site} is a pair \((\cat{C}, J)\) consisting of a category \(\cat{C}\) and a Grothendieck topology on \(\cat{C}\).
\end{definition}

\begin{definition}\label{def:matching-family}
    Let \((\cat{C}, J)\) be a site, \(S\) be a \(J\)-cover of an object \(c \in \cat{C}\) and \(P: \cat{C}^{op} \to \mathbf{Set}\) be a presheaf. Then a \define{matching family} of sections of \(P\) with respect to the cover \(S\) is a morphism (natural transformation) of presheaves \(\chi \colon S \Rightarrow P\).
\end{definition}

\begin{definition}\label{def:sheaf}
    Let \(P \colon \cat{C}^{op} \to \mathbf{Set}\) be a presheaf on a site \((\cat{C}, J)\). Then we call \(P\) a \define{sheaf with respect to \(J\) }(or a \(J\)-sheaf) if for all \(c \in \cat{C}\) and for all covering sieves \(\bigl(\iota \colon S \Rightarrow y_c\bigr) \in J(c)\) of \(c\) each matching family \( \chi: S \Rightarrow P\) has a unique extension to the morphism \(\mathcal{E} \colon y_c \Rightarrow P\).
\end{definition}

\end{document}